\newcommand*\patchAmsMathEnvironmentForLineno[1]{%
	\expandafter\let\csname old#1\expandafter\endcsname\csname #1\endcsname
	\expandafter\let\csname oldend#1\expandafter\endcsname\csname end#1\endcsname
	\renewenvironment{#1}%
	{\linenomath\csname old#1\endcsname}%
	{\csname oldend#1\endcsname\endlinenomath}}% 
	\newcommand*\patchBothAmsMathEnvironmentsForLineno[1]{%
	\patchAmsMathEnvironmentForLineno{#1}%
	\patchAmsMathEnvironmentForLineno{#1*}}%
\definecolor{qiblue}{rgb}{0, 0, 1}
\definecolor{sgreen}{RGB}{0, 106, 20}
\begin{document}

\newcommand\x{\textbf{x}}
\newcommand\w{\textbf{w}}
\renewcommand\ss{\{0, 1\}^{N}}
\newcommand\C{\textbf{C}}
\newcommand\D{\textbf{D}}
\renewcommand\S{\mathcal{S}}
\newcommand\rep[1]{p_{(R,\alpha)}(#1)}
\newcommand\td{\text{d}}
\newcommand\G{\mathcal{G}}
\newcommand\V{\mathcal{V}}
\newcommand\E{\mathcal{E}}
\newcommand\prob{\mathbb{P}}
\newcommand\Z{\mathbb{Z}}

\newcommand\N{\mathcal{N}}
\renewcommand\d{\text{d}}
\renewcommand\a{\textbf{a}}
\renewcommand\b{\textbf{b}}
\renewcommand\c{\textbf{c}}
\renewcommand\i{\textbf{i}}
\newtheorem{definition}{Definition}
\newtheorem{lemma}{Lemma}
\newtheorem{theorem}{Theorem}
\newtheorem{proposition}{Proposition}
\newtheorem{corollary}{Corollary}

\bibliographystyle{unsrt}
\title{The emergence of burstiness in temporal networks}

\author
{Anzhi Sheng$^{1,2}$, Qi Su$^{2,3,4}$, Aming Li$^{1,5\ast}$, Long Wang$^{1,5\ast}$, Joshua B. Plotkin$^{2,3\ast}$\\
\normalsize{$^{1}$Center for Systems and Control, College of Engineering,}\\
\normalsize{Peking University, Beijing, 100871, China}\\
\normalsize{$^{2}$Department of Biology,}\\
\normalsize{University of Pennsylvania, Philadelphia, PA 19104, USA}\\
\normalsize{$^{3}$Center for Mathematical Biology,}\\
\normalsize{University of Pennsylvania, Philadelphia, PA 19014, USA}\\
\normalsize{$^{4}$School of Mathematics and Statistics,}\\
\normalsize{University of St Andrews, St Andrews, KY16 9SS, United Kingdom}\\
\normalsize{$^{5}$Center for Multi-Agent Research, Institute for Artificial Intelligence,}\\
\normalsize{Peking University, Beijing, 100871, China}\\
\normalsize{$^\ast$Corresponding authors. Email: amingli@pku.edu.cn, \\ longwang@pku.edu.cn, jplotkin@sas.upenn.edu}\\
}
\date{ }
\maketitle
%\linenumbers

\begin{abstract}
Human social interactions tend to vary in intensity over time, whether they are in person or online. Variable rates of interaction in structured populations can be described by networks with the time-varying activity of links and nodes. One of the key statistics to summarize temporal patterns is the inter-event time (IET), namely the duration between successive pairwise interactions. Empirical studies have found IET distributions that are heavy-tailed (or ``bursty"), for temporally varying interaction, both physical and digital. But it is difficult to construct theoretical models of time-varying activity on a network that reproduces the burstiness seen in empirical data. Here we develop a spanning-tree method to construct temporal networks and activity patterns with bursty behavior. Our method ensures a desired target IET distribution of single nodes/links, provided the distribution fulfills a consistency condition, regardless of whether the underlying topology is static or time-varying. We show that this model can reproduce burstiness found in empirical datasets, and so it may serve as a basis for studying dynamic processes in real-world bursty interactions.
\end{abstract}

\section{Introduction}
Temporal networks have long been recognized as a powerful tool to model complex systems with time-varying interactions \cite{holme2012, holme2015, masuda2016guide}.
A large body of literature concentrates on analyzing the activation dynamics of nodes and links in such networks.
The inter-event time (the waiting time between two consecutive interaction events, IET) is a canonical measure of temporal patterns, and it is known to have profound effects on individual behavior \cite{karsai2018bursty, karsai2012universal, gernat2018automated, choi2020individual} and dynamical processes occurring on networks \cite{karsai2011small, jo2014analytically, lambiotte2013burstiness, mancastroppa2019burstiness, li2020}.
A variety of empirical datasets, such as email and mobile communications \cite{onnela2007structure, malmgren2008poissonian, malmgren2009universality}, epidemic transmission \cite{vazquez2006polynomial,vazquez2007impact, unicomb2021dynamics}, and human mobility \cite{song2010modelling, schlapfer2021universal}, exhibit non-Poisson-like  activity patterns, known as burstiness \cite{karsai2018bursty, barabasi2005origin, vazquez2006modeling}.
These IET patterns are characterized by periods of frequent activation interleaved with long periods of silence.  
Empirical networks often exhibit burstiness in both the activity of individuals (nodes) as well as interactions (edges) \cite{karsai2012correlated, saramaki2015seconds, genois2018can}.

It has proven difficult to construct synthetic temporal networks whose properties are similar to the bursty behavior seen in empirical temporal networks \cite{goh2008burstiness}.
Previous approaches can be divided into two categories: structure-based modeling, and contact-based modeling.
The former approach applies dynamical processes on static underlying topologies, such as random walks \cite{starnini2012random, barrat2013modeling}, link dynamics \cite{holme2013epidemiologically}, and inhomogeneous Poisson processes \cite{hiraoka2020modeling}.
For example, Barrat et al used random itineraries on weighted underlying networks to generate time-extended structures with bursty behavior \cite{barrat2013modeling}. 
The latter approach uses a stream of contacts generated by certain realistic mechanisms, such as social appeal \cite{starnini2013modeling}, individual resource \cite{aoki2016temporal}, and memory \cite{vestergaard2014memory}.
For example, Perra et al propose the activity-driven model \cite{perra2012activity}, in which each node, isolated in the beginning, becomes active with a probability proportional to its own activity potential and forms links with other nodes.
Despite a large body of studies in constructing temporal networks, they usually fail to reproduce the level of burstiness as empirical datasets, and they lack mathematical guarantees for the behavior of the synthetic network.

In this study, we provide an analytical framework to systematically construct temporal networks on both static and time-varying underlying topologies.
Our construction algorithm can reproduce the burstiness of both nodes and edges in four empirical datasets, including social interactions in rural Malawi, colleague relationships in an office building over two years, and friendship relations in a high school. 
The assumptions of our model can also be tested in the empirical datasets.
Our construction thus serves as an efficient method to generate realistic temporal networks that can then be used to investigate dynamical processes (such as evolutionary dynamics, social contagion, or epidemics) on temporal networks.

Our approach to constructing temporal networks uses a spanning-tree method.
Spanning trees are widely recognized as a significant family of sparse sub-graphs since they tend to govern dynamical processes on full graphs \cite{newman1994spin, barabasi1996invasion, kossinets2008structure, garlaschelli2003universal}.
For example, in social networks, the backbone of an aggregated communication topology is often constructed as the union of shortest-path spanning trees, on which information flows fastest \cite{kossinets2008structure}. As a result, a large portion of directed edges are bypassed by faster indirect routes in the tree. In studies of food-webs  \cite{garlaschelli2003universal}, spanning trees are defined as the flows from the environment to every species. 
The links in or out of the tree are denoted as ‘strong’ or ‘weak’ links, related to delivery efficiency or system robustness and stability.
 
\section{Spanning-tree method}
We introduce a spanning-tree method for constructing temporal networks on any underlying topology, which restricts the interaction pattern between pairs of individuals. 
The activity of every single node and link is a binary-state process, switching between active and inactive.
We use inter-event time (IET) distribution, which measures the time intervals between consecutive activations, to quantify the activity patterns of nodes and edges.
%Accordingly, the activity pattern of a temporal network is represented by the statistical IET distributions of nodes and links.

Our method allows for both static and time-varying underlying topologies.
Here we first consider static underlying topologies, in which the activation dynamics on a given topology is much faster than the evolution of the underlying topologies, such as the time-varying traffic flow on a relatively stable road network.
We consider three classes of topologies in order of increasing complexity: two-node topologies, tree topologies, and finally arbitrary structured topologies. 

\subsection{Two-node systems}
We begin with a basic unit of a networked system -- a two-node system, which is composed of nodes $x$ and $y$ and a link $z$ between them (see Fig.~1A). The nodes and edges are either active or inactive at each discrete time step.
We assume that the state updating of $x$ follows a renewal process $\{X_n \}_{n\ge 0}$ and assigns $x$ a probability mass function $F(\Delta t)$ as its target IET distribution (see Supplementary Material section 1). The random variable $X_n$ equals $1$ if $x$ is active at time $n$, otherwise $X_n = 0$. Likewise for node $y$ and edge $z$, which have respective target IET distributions   $G(\Delta t)$ and $ H(\Delta t)$, and respective renewal processes  $\{Y_n \}_{n\ge 0}$ and $ \{Z_n \}_{n\ge 0}$. The initial state of $x,y,z$ is active (i.e. $X_0=Y_0=Z_0=1$). The goal is to construct a two-node temporal network that satisfies the target IET distributions of $x$, $y$, and $z$. 

By definition, we say that edge $z$ is active when $x,y$ are both active (i.e. $Z_n = X_nY_n$).
Furthermore, we assume that, given the trajectory of $x$ until $n$, the probability of $x$ being active at time $n+1$ is independent of the trajectory of $y$ until $n$, that is, 
\begin{equation}
	\prob(X_{n+1},Y^{(n)}|X^{(n)}) = \prob(X_{n+1}|X^{(n)})\cdot \prob(Y^{(n)}|X^{(n)}).
\end{equation}
Given these assumptions, we can show that when all targeted distributions are identical (i.e. $F=G=H$) the system must be completely synchronous, that is, all of $x,y,z$ are either active or inactive at each time step (see Supplementary Material section 2).

\begin{figure}[p]
\centering
\includegraphics[width=1\textwidth]{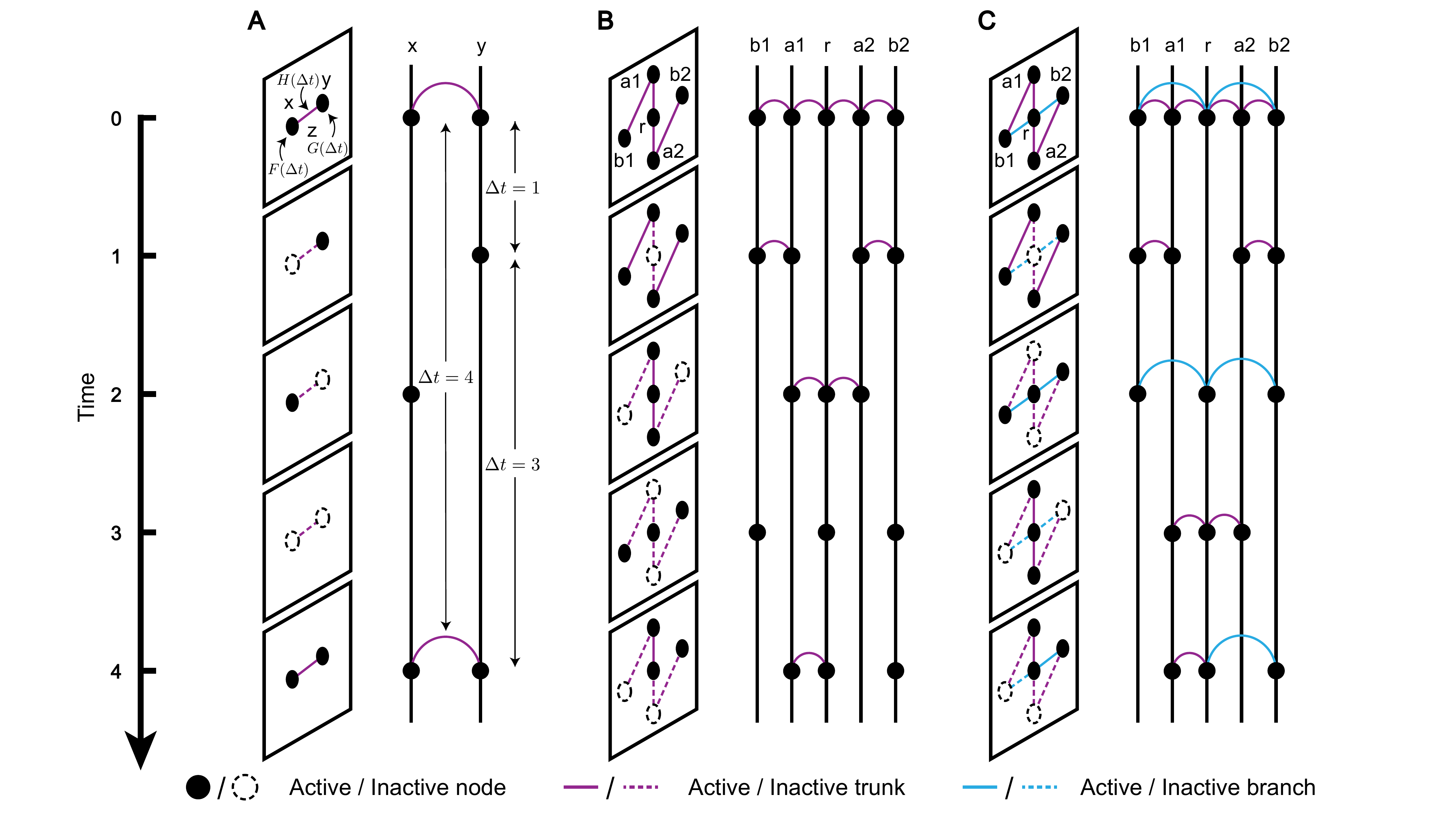}
\caption{\textbf{Schematic illustration of constructing temporal networks on different underlying topologies.}
Each node/link switches between two states, active (solid circle/line) and inactive (dashed circle/line), and all nodes and links are set to be active initially.
(\textbf{A}) The basic unit of a network system is a two-node system with nodes $x$, $y$, and a link $z$ connecting them. 
At the beginning of temporal network construction, three probability mass functions $F(\Delta t), G(\Delta t), H(\Delta t)$ are given as the expected IET distributions for $x,y,z$, where $\Delta t$ represents the time interval between two consecutive activations. Then, the activity of $x,y,z$ is driven by the renewal processes with the corresponding expected IET distributions.
We constraint that $z$ is active only if both $x$ and $y$ are active.
(\textbf{B}) An extension of two-node systems is tree systems, in which nodes are divided into two categories, a root ($r$) and leaves ($a1, a2, b1, b2$). 
The state of nodes and links is updated by sequentially executing the algorithm over each two-node system from the root ($r$-$a1$ and $r$-$a2$) to the outermost leaves ($a1$-$b1$ and $a2$-$b2$). 
(\textbf{C}) There is at least one spanning tree for any static underlying topology. 
The links in the spanning tree are called trunks (purple lines) and the links outside the spanning tree are called branches (blue lines). 
The states of nodes and trunks are updated first, and then the states of branches are established according to the state of the nodes on both sides.
}
\end{figure}

Given the state of the first $n$ times (from time $0$ to $n-1$), the conditional probability that node $x$ is active at time $n$ is given by
\begin{equation}
	\prob(X_n=1|X_{n-1}=w^{(n-1)}_{x},...,X_{0}=w^{(0)}_{x}):=p_x(\w^{(n-1)}_x,1) = \frac{F(n-m)}{\sum_{i\ge n-m} F(i)}.
\end{equation}
where $\w_x^{(n-1)}=(w^{(0)}_{x},...,w^{(n-1)}_{x})^{\text{T}} \in \{0,1\}^{n}$ records all historical states of node $x$ before time $n$, called $x$'s trajectory, and $m=\max \{ k\le n: w^{(k)}_{x}=1 \}$ represents the last activation time of $x$.
Analogous conditional probabilities apply to $y$ and $z$.

We propose an algorithm to construct two-node temporal networks, such that the  IET distributions of $x,y,z$ fit the desired targeted distributions $F, G, H$ and the desired total time duration $t_{tol}$.
At each time step $t=n+1$ ($0 \le n \le t_{tol}-1$), we calculate four probabilities with Eq.~2,
\begin{equation}
\begin{aligned}
		&p_1 = p_z(\w^{(n)}_z,1),\ p_2 = p_x(\w^{(n)}_x,1) - p_z(\w^{(n)}_z,1), \\
		&p_3 = p_y(\w^{(n)}_y,1) - p_z(\w^{(n)}_z,1),\ p_4 = 1 + p_z(\w^{(n)}_z,1) - p_x(\w^{(n)}_x,1) - p_y(\w^{(n)}_y,1).
\end{aligned}
\end{equation}
$p_1$ represents the probability that $z$ is active at $t$, and $p_2$ (respectively $p_3$ and $p_4$) represents the probability that $z$ is inactive when $x$ is active (respectively $y$ is active and both $x$ and $y$ are inactive).
Next, we determine the state of $x$ and $y$ at $t$, in order.
The probability that $x$ is active is $p_x(\w^{(n)}_x,1)$.
If $x$ is active, the probability of $y$ being active is $p_1/ p_x(\w^{(n)}_x,1)$.
Otherwise, the probability becomes $p_3/(1- p_x(\w^{(n)}_x,1))$.
Finally, we update the trajectory of $z$ by the relation $w_z^{(t)}=w_x^{(t)} w_y^{(t)}$.
The construction stops when $t=t_{tol}$.
Algorithm 1 in Supplementary Material outlines the above procedure.
It is worth noting that the activation order of $x$ and $y$ does not affect the IET distributions of $x,y,z$.

Although the algorithm is specified for an arbitrary combination of target IET distributions $(F, G, H)$, these distributions must  satisfy an implicit condition in order to guarantee the consistency of the construction. For example, if the target distributions specify that the link $z$ is activated more frequently than the nodes $x$ and $y$, then no construction is possible, because the link is active only when both nodes are active.
This would result in at least one of the probabilities $p_i$ ($i=1,2,3,4$) in Eq.~3 being less than $0$ during the construction.
We say that the combination $(F, G, H)$ of target distributions is consistent if $p_i$ ($i=1,2,3,4$) belong to $[0, 1]$ for all possible trajectories $\w_x^{(n)}$ and $\w_y^{(n)}$ with any length $n$.
%We also claim that the two-node system is consistent when $(F, G, H)$ is consistent.
When a two-node system is consistent, the algorithm is well-defined, and it provably ensures that the IET distributions of $x,y,z$ will satisfy the targets $F, G, H$, respectively (see Supplementary Material section 2).

\subsection{Tree systems}
The construction for two-node systems can be naturally extended to tree systems, which consist of a number of interconnected two-node systems (Fig.~1B).
We randomly select a node as the root $r$ and classify the remaining nodes (i.e. leaves) according to their distance from $r$. 
We assign each node and link a targeted IET distribution.
At each time step, we first determine the state of $r$, which is only related to its own trajectory.
Then, every leaf one step away from $r$ ($a1$ and $a2$ in Fig.~1B) forms a two-node system with $r$, and the states of these leaves are determined by Algorithm 1.
Next, all leaves at the first layer form two-nodes systems with their corresponding leaves at the second layer [$(a1,b1)$ and $(a2,b2)$ in Fig.~1B].
Analogously, the states of all leaves are updated within a two-node system in the order of their layers. 
Algorithm 2 in Supplementary Material summarizes this procedure.

This procedure requires an additional assumption of conditional independence -- that for a pair of two-node systems sharing a node, if the state of the common node is given then the activity of the other two nodes are independent. As shown in Fig.~1B, we present two examples, ($r$, $a1$, $a2$) and ($r$, $a1$, $b1$), fulfilling the condition. The former indicates that the activity of nodes on the same layer ($a1, a2$) is independent when the state of the node in the upper layer ($r$) is settled.
The latter indicates that the activity of a node ($b1$) is not affected by the node that is more than one layer away ($r$), given the state of the middle node ($a1$).

It is straightforward to show that a tree system is consistent if all two-node systems in the tree are consistent. Similarly, when a tree system is consistent, the algorithmic IET distribution of every single node/link fulfills its corresponding target distribution (see Supplementary Material section 2). 

\subsection{Spanning-tree based construction}
For any static (but arbitrary) underlying topology, we can always find a spanning tree  (Fig.~1C). 
A link is called a trunk if it is in the spanning tree, and it is called a branch otherwise. 
In the construction algorithm, we first select a spanning tree and choose a target IET distribution for each node and trunk.
Next, we execute Algorithm 2 on the spanning tree, so that all nodes and trunks can update their states to achieve the targeted IETs.
The state of each branch is then active when the nodes on both ends are active.
As a result, the activity of the system is determined by its spanning tree.
Algorithm 3 in Supplementary Material summarizes this procedure.

We say that the system is consistent when its spanning tree system is consistent.
In particular, when all two-node systems in the spanning tree are synchronous, the trajectories of all nodes and links (including trunks and branches) are identical, and the whole system is synchronous.

\section{Applications}
\subsection{Synthetic temporal networks}
To test our algorithm, we construct temporal networks with one of two activity patterns -- bursty activity patterns and Poisson activity patterns. Bursty patterns arise when there is a simultaneous bursty activity in node and links activity, and Poisson pattern arises in settings such as bank queuing systems \cite{ross1996stochastic} and spreading dynamics \cite{barthelemy2004velocity}.
In particular, we choose target IET distributions of nodes and trunks that are power-law distributions for bursty activity patterns, given by
\begin{equation}
	p(\Delta t; \alpha) \sim \Delta t^{-\alpha} \quad (\alpha > 1).
\end{equation}
And we choose discrete exponential distributions for Poisson-like activity patterns, given by
\begin{equation}
	p(\Delta t; \alpha) \sim \int_{\Delta t - 1/2}^{\Delta t + 1/2}\alpha e^{-\alpha x} \text{d}x \quad (\alpha > 0),
\end{equation}
where $\Delta t$ represents the inter-event time and $\alpha$ the exponent.
The respective survival functions are given as
\begin{equation}
	\prob(T > \Delta t) \sim \Delta t ^{-\alpha+1}
\end{equation}
with exponent $\alpha-1$ and 
\begin{equation}
	\prob(T > \Delta t) \sim e^{-\alpha \Delta t}
\end{equation}
with exponent $\alpha$.

In the main text, we focus on a simple case when all nodes (trunks) have a common exponent $\alpha_{pmf}$ ($\beta_{pmf}$), and we use the aggregated IET distribution \cite{barabasi2005origin, vazquez2006modeling, karsai2012correlated}, that counts the IETs of all nodes or links, to quantify the intensity of activity.
In Supplementary Material, we also consider the IET distributions of every single node and link (Fig.~S1) and we investigate the relationship with the aggregated IET distributions. We also explore a more general case in which the exponents of nodes and trunks are sampled independently from a distribution (Fig.~S2). 

\begin{figure}[h]
\centering
\includegraphics[width=1\textwidth]{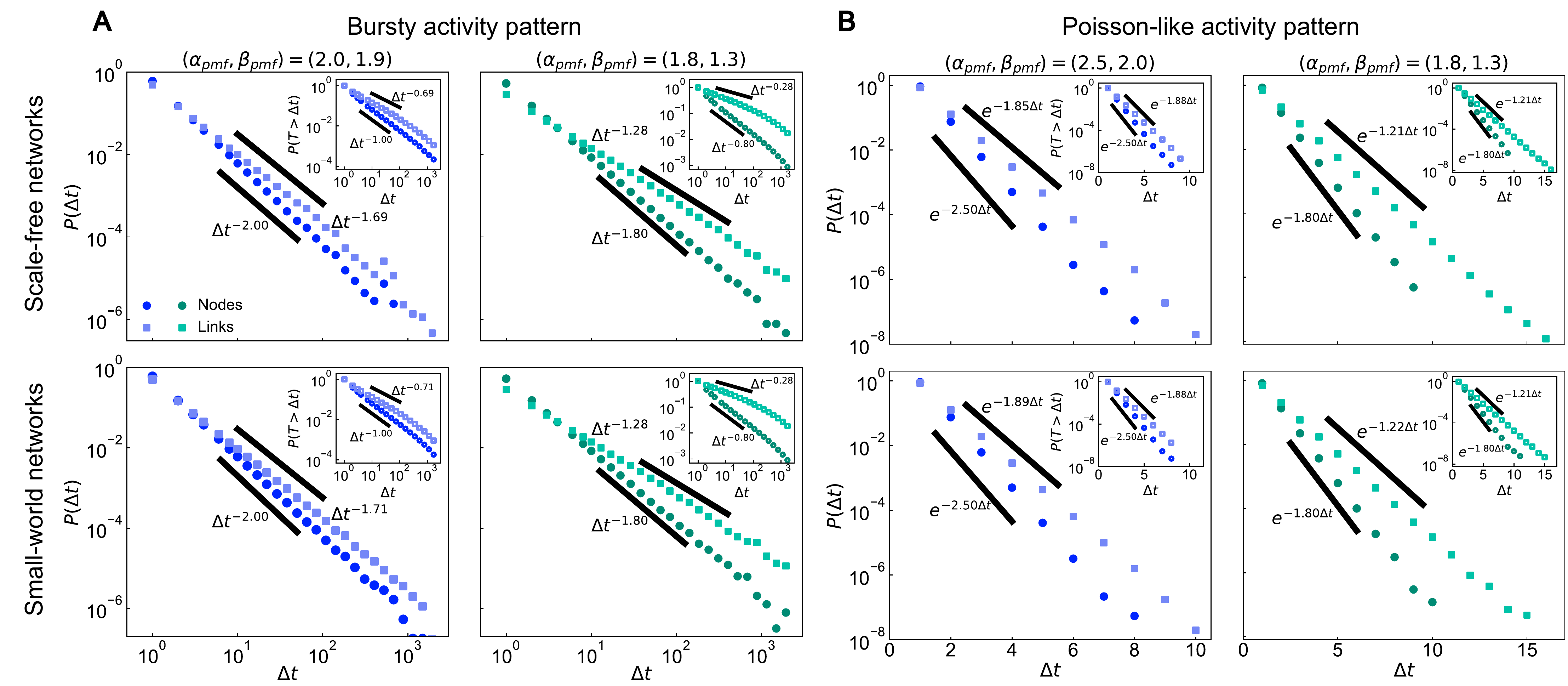}
\caption{\textbf{Aggregated IET distributions of nodes and links on different underlying topologies.}
We consider the construction of temporal networks on two classes of static underlying topologies, Barabási-Albert scale-free networks (first row) and Watts-Strogatz small-world networks (second row).
The targeted exponent of every single node (trunk) is identical, denoted as $\alpha_{pmf}$ ($\beta_{pmf}$).
We select a pair of high exponents (blue dots) and a pair of low exponents (green dots) for the bursty activity pattern (\textbf{A}) and the Poisson-like activity pattern (\textbf{B}), given as $(\alpha_{pmf}, \beta_{pmf}) = (2.20, 2.10), (1.80, 1.30)$ and $(\alpha_{pmf}, \beta_{pmf}) = (2.50, 2.00), (1.80, 1.30)$, respectively.
The algorithmic distributions of nodes (circles) and links (squares) are well predicted by power-law distributions in (\textbf{A}) and by exponential distributions in (\textbf{B}). 
The thick black lines with fitted exponents are plotted for reference.
The distributions are the average over $50$ independent trials.
Parameter settings: network size $N=10^3$, average degree $k=6$, and network length $t_{tol}=10^4$ in (\textbf{A}) and $t_{tol}=10^3$ in (\textbf{B}).
}
\end{figure}

We begin our analysis by constructing temporal networks with bursty activity patterns.
We derive a necessary and sufficient condition for system consistency, which applies to any network length $t_{tol}$ (see Materials and methods Eq.~13 and Supplementary Material for details). 
As examples, we consider two pairs of exponent setups, $(\alpha_{pmf}, \beta_{pmf}) = (2.20, 2.10), (1.80, 1.30)$, and we execute Algorithm 3 over two classes of static underlying topologies, Barabási-Albert scale-free networks \cite{barabasi1999emergence} and Watts-Strogatz small-world networks \cite{watts1998collective}.
Figure 2A shows the aggregated IET distributions of nodes and links.
For both underlying topologies and both parameter setups, the probability mass functions $\prob(\Delta t)$ and the corresponding survival functions $\prob(T > \Delta t)$ are well fit by power-law distributions, showing simultaneous burstiness in nodes and links.
The best-fit exponent for nodes matches the exponent of the target IET distribution, and the exponent for links is slightly lower than the target exponent, due to the impact of branches (edges outside the spanning tree, where the algorithm is guaranteed to work).

Next, we construct temporal networks with Poisson-like activity patterns. If the target distributions for nodes and trunks are Poisson distributions, then we can prove that the system is never consistent (see Supplementary Material section 2). However, it is possible to construct consistent systems  when the target IET follows  discrete exponential distributions. In this case, we derive a necessary and sufficient condition for system consistency -- namely, that the difference of $\alpha_{pmf}$ and $\beta_{pmf}$ lies in $[0, \ln2]$, meaning that the activity of nodes must be more frequent than links but not too frequent (see Materials and methods and Supplementary Material for details).
Figure 2B shows the aggregated IET distributions along with the exponents of the target distributions $(\alpha_{pmf}, \beta_{pmf}) = (2.50, 2.00)$ and $ (1.80, 1.30)$. All the distributions follow the expected exponential decay.

%We also prove that the system is never consistent when the targeted distributions of single nodes and trunks are Poisson distributions (see Supplementary Material section 2).

Comparing the results for the two topologies, we find that the exponents produced by the algorithmic construction are sensitive to the choice of target IET distribution, but robust to the choice of network topology.
To examine this observation more generally, we investigated a wide range of random regular underlying topologies with different average degrees, ranging from $5$ to the well-mixed case (i.e., each node linked to all other nodes, see figs.~S3 and S4). We find that we can robustly match target IET distributions across all these topologies: the relative deviation between the largest and smallest exponent in allegorically constructed networks is within $4\%$.

We can understand why the algorithmic construction for producing a desired IET distribution is robust to topology by analyzing the activity of branches.
In particular, we prove that the IET distribution of every single branch is approximately a power-law distribution (respectively a discrete exponential distribution) in bursty activity patterns (respectively Poisson-like activity patterns) with uniform upper and lower bounds related only to the target distributions (see Supplementary Material section 3).
And we can confirm that the aggregated IET distributions are also robust to the selection of spanning trees (fig.~S5).

\begin{figure}[t]
\centering
\includegraphics[width=1 \textwidth]{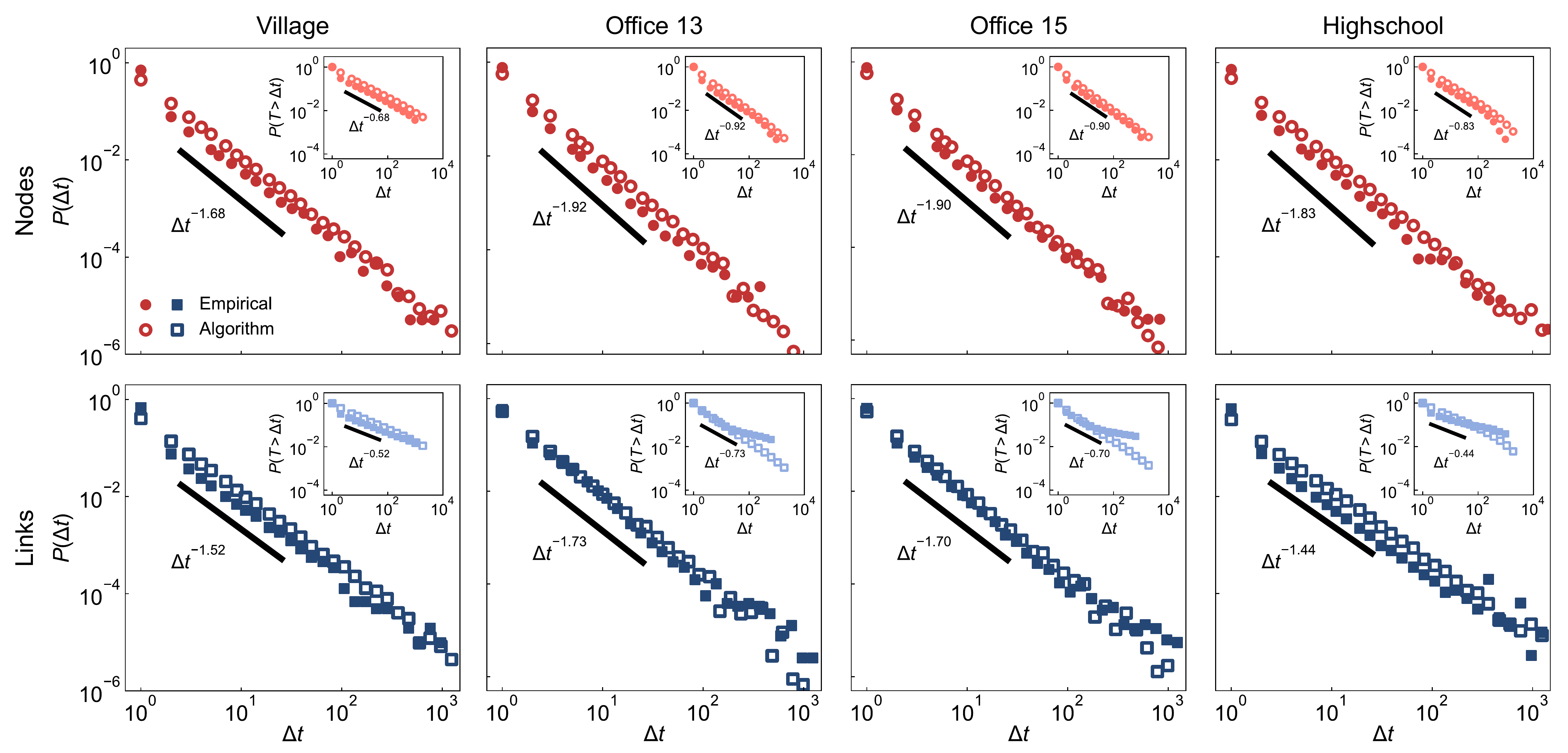}
\caption{\textbf{Burstiness in empirical datasets.}
We analyze four empirical temporal networks of social interactions within and across households among $84$ individuals in a village \cite{ozella2021using}; colleague relationships among $95$ and $219$ employees in an office building in two different years (2013 and 2015) \cite{genois2015data, genois2018can}; and friendship and educational relationships among $327$ students in a high school in Marseilles \cite{mastrandrea2015contact}. 
The length of these temporal networks from left to right are 43436, 20129, 21536, and 7375. 
For each empirical temporal network, we count the aggregated ICT distributions of nodes (solid circles) and links (solid squares), which both present bursty behaviors and are well-predicted by power-law distributions.
The thick black lines are power-law distributions with the fitted exponents.
We take the fitted distributions as targets and obtain the respective algorithmic distributions of nodes (hollow circles) and links (hollow squares).
The algorithmic distributions present the same level of burstiness as the corresponding empirical distributions.
}
\end{figure}  

\subsection{Empirical temporal networks}
We tested the ability of our algorithm to reproduce the burstiness of activity patterns observed in four empirical datasets, collected by the SocioPatterns collaboration \cite{web}.
These four datasets record pairs of face-to-face interactions from different social contexts, ranging from a village in rural Malawi, to an office building and a high school in France.
Each dataset is comprised of contact events with timestamps, represented by triplets $(t,i,j)$ -- indicating the occurrence of an interaction between individual $i$ and individual $j$ at time $t$.

It is worth noting that the empirical data record only the communication moments, so that only the active nodes with at least one active neighbor can be detected. In other words, the empirical data are observations of the inter-communication times (ICTs), rather than the IETs of nodes. 
Nevertheless, we demonstrate that the ICT distribution of single nodes converges exponentially to the IET distribution as the number of neighbors on the underlying topology increases (see Supplementary Material section 4).
Since empirical datasets often originate from highly connected populations, the ICT distributions approximate the statistical properties of the corresponding IET distributions.

Before applying our algorithm, we first test whether the assumptions underlying the algorithmic construction are consistent with the empirical datasets.
The assumption that the activity of nodes and links is a renewal process is reasonable, compared to the empirical data (fig.~S6). However,  the strong form of the conditional independence (Eq.~1) assumed by our construction is rejected for the empirical data (fig.~S7). Nonetheless, the empirical data satisfy a weaker form of conditional independence (fig.~S8, see Supplementary Material section 5 for details).

After pre-processing the datasets (see Materials and methods), we obtain four empirical temporal networks with population sizes ranging from $N=84$ to $N=327$ and length from $t_{tol}=7,375$ to $t_{tol}=43,436$ time-steps.
For all of these temporal networks, the empirical ICT distributions of nodes and links both exhibit heavy tails, with different decay rates, showing simultaneous burstiness in activity.
We fit these empirical ICTs with power-law distribution (Eq.~4) by maximum likelihood estimation \cite{clauset2009power, alstott2014powerlaw},  and we use the fitted distributions as targets for constructing synthetic temporal networks.
Figure 3 shows the comparison between the empirical and algorithmically constructed ICT distributions of nodes and links. 
Our algorithm  successfully replicates the qualitative patterns of burstiness observed in empirical datasets.
Figure S9 shows the comparison between the IET and ICT distributions of nodes.
Since the average degree of these empirical underlying topologies is large, the IET distribution collapses onto the ICT distribution.

\subsection{Combination with network evolution}

\begin{figure}[t]
\centering
\includegraphics[width=1 \linewidth]{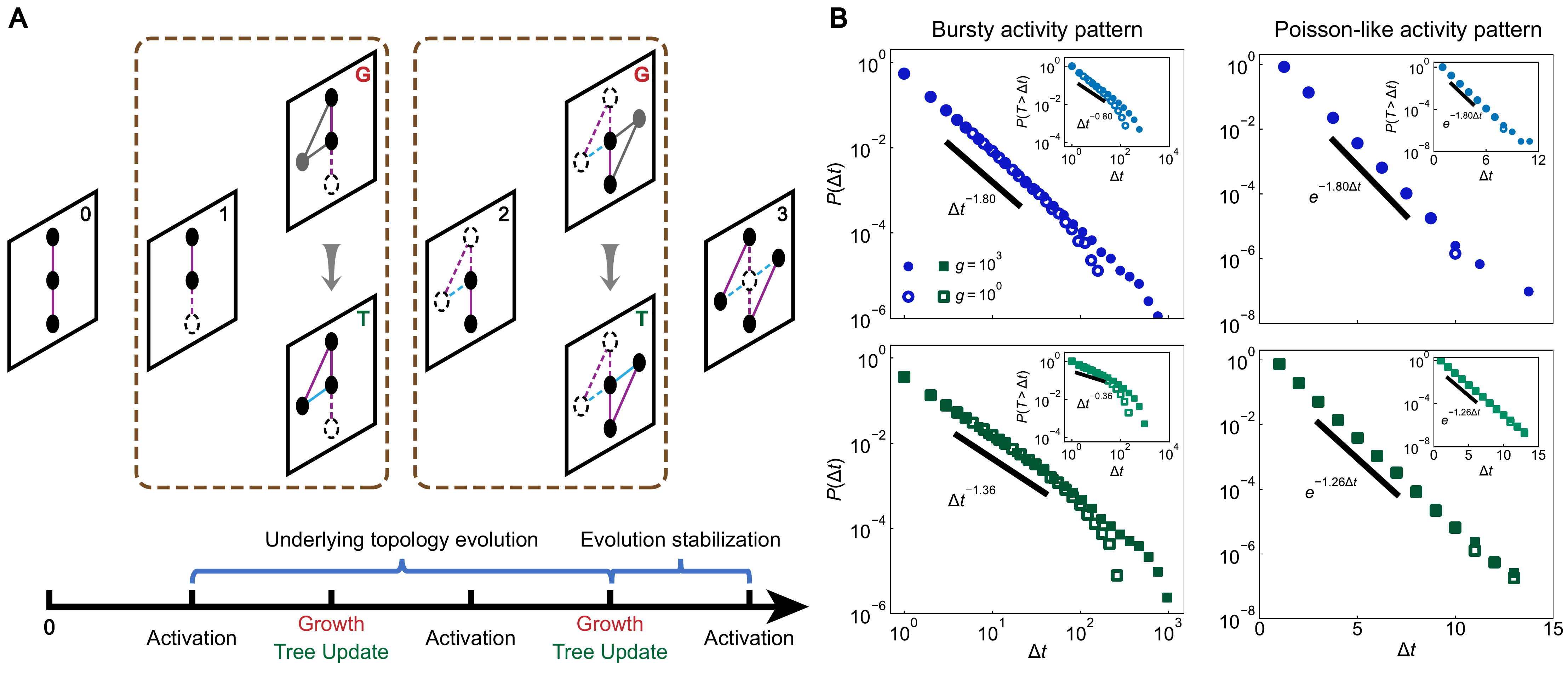}
\caption{\textbf{Construction on time-varying underlying topologies.} 
(\textbf{A}) We consider the temporal network construction with a time-varying topology modeled by the Barabási-Albert model.
There are $m_0$ nodes in the initial snapshot. 
When the underlying topology is evolving, a node with $m$ link(s) enters the network system (snapshot G), and the spanning tree is updated accordingly (snapshot T).
When the evolution is stable, the construction process degenerates to that for the static underlying topology.
(\textbf{B})  
We consider the impact of the duration time $g$ after network evolution stabilization.
Our model produces expected bursty and Poisson-like activity patterns on such a time-varying underlying topology.
Furthermore, the activity patterns formed during network evolution (hollow dots) are maintained after evolution stabilization (solid dots).
Parameter settings: $\alpha_{pmf}=1.8$, $\beta_{pmf}=1.5$, $m_0=3$, $m=3$, and final network size $N=500$.}
\end{figure}

Although some underlying topologies are static, a variety of real-world systems also exhibit topology changes over timescales that are comparable to the activation dynamics on the network.
For example, in online social networks, new users can enter the network and engage in new interactions with existing users; or users can switch between online and offline states.
As a consequence, the temporal changes in activity originate not only from the states of existing nodes and links, but also from the addition and subtraction of nodes and links in the underlying structure. 

With this as a backdrop, we extend our algorithm from static to dynamic underlying topologies, starting first with networks that grow in size. We introduce a new model that combines our algorithm for constructing temporal networks with the Barabási-Albert model \cite{barabasi1999emergence}, which we call the temporal Barabási-Albert model.
The construction process is as follows.
An underlying topology is initialized with $m_0$ nodes, and the spanning tree is selected randomly.
At each time step 
%before the underlying topology stops evolving, 
the activity state of the existing network updates once with Algorithm 3, then one adds a new node with $m\le m_0$ links connected to $m$ different existing nodes following the preferential attachment rule \cite{barabasi1999emergence}.
All the newly added elements are set to be active, and the states of the older pre-existing elements are updated accordingly.
The spanning tree is then updated by adding the new node and a link randomly selected from $m$ new links.
At some time point, the underlying topology stops growing, and the construction process continues with $g\ge 0$ more steps on the final state of the underlying topology.
Figure 4A shows a schematic illustration of this above procedure. 
Figure 4B shows numerical simulations of the temporal Barabási-Albert model in bursty and Poisson-like activity patterns.
We find that the IET distributions of nodes and links are robust to the duration time $g$, which means that the activity pattern is established during the evolution of the underlying topology, and it is then preserved after the topology is fixed.

In addition to network growth, nodes and links may also be removed, due to aging effects and other recessionary impacts.
To model these various kinds of network evolution, we introduce a more general procedure for constructing temporal networks on time-varying underlying topologies (see Supplementary Material Algorithm 4).
According to this perspective, a contact-based method such as the activity-driven model can be viewed as a structure-based method with specific underlying topology evolution.  

\section{Discussion}
Simple models that neglect temporal variation in individuals' behaviors do not suffice to describe the dynamics of many real-life complex systems.
A large and growing body of studies supports that state switching of individuals and interactions plays a significant role in diverse dynamical processes, such as face-to-face communication \cite{genois2018can}, evolutionary dynamics \cite{li2020, sheng2021}, and network control \cite{li2017fundamental}.  
We have proposed an analytical framework and corresponding spanning-tree method to construct temporal networks with specific activity patterns, including bursty and Poisson-like activity patterns. The algorithm is able to reproduce the simultaneous burstiness observed in empirical datasets from diverse social contexts. 

The central ingredient in our construction algorithm -- the spanning tree -- has been widely recognized as a significant feature in both theoretical  \cite{wang2009divide, xiao2006state, xiao2008asynchronous} and real-world applications of network science \cite{parsakhoo2016determining}.
For example, in path-finding algorithms such as Dijkstra's algorithm \cite{dijkstra1959note} and the A* search algorithm \cite{hart1968formal}, the shortest paths from a given source node to all other nodes, together with the source node, form a shortest-path tree.
These algorithms are widely used in mobile robot covering problems \cite{gabriely2001spanning}, for tracking the establishment of oil pipelines \cite{baeza2017comparison}, and for vehicle routing \cite{imran2009variable}. 
The tree structure is the backbone of these networked systems.
Other examples include telecommunication networks, including the Internet, where the Spanning Tree Protocol \cite{braem2006wireless} and Augmented tree-based routing \cite{caleffi2007augmented} are used to avoid routing loops, to solve the scalability problem, and to gain resilience against node failure and link instability.
And in social networks, spanning tree-based algorithms have been proven effective in detecting communities, one of the most widely studied issues in network science \cite{basuchowdhuri2014detection}. 
%The generated spanning tree dramatically reduces the complexity of relationships in social networks.
And so our spanning tree-based method for generating specific activity patterns might have implications in several areas of application,  which remain to be investigated.

%We have generated bursty or Poisson-like activity patterns in synthetic and empirical temporal networks and found that the algorithmic results are sensitive to targeted distributions but robust to underlying topologies and spanning tree selections.
Our analysis of activity patterns that are robust to underlying topology indicates that activity patterns of a network system is strongly determined by the dynamics of its spanning tree.
Branches from the spanning tree cause only small perturbations to the activity patterns, regardless of their number and location, 
%and instead stabilize the level of activity of the whole system, 
showing a similar function to the 'weak' links in food webs \cite{garlaschelli2003universal}. 
As a result of this robustness, we can construct a temporal network with a desired consistent activity pattern, even if the underlying topology is not precisely known, or even changing in time.

Another straightforward way to measure the burstiness and memory of temporal networks is to calculate the burstiness parameter \cite{goh2008burstiness} and the auto-correlation function (see Supplementary Material section 7).
A larger burstiness parameter means a higher level of burstiness, and a lower absolute value of the auto-correlation function means a weaker dependence on memory.
Our results show that our synthetic bursty activity patterns have strong auto-correlation and a positive and high burstiness parameter, while the Poisson-like activity patterns are memoryless and have a negative burstiness parameter (tables S1 and S2, fig.~S10).

In addition to burstiness, real-world datasets often exhibit robust behaviors across different time scales \cite{barrat2013modeling, gautreau2009microdynamics}.
In this study, we concentrate on the node strength distribution of aggregated networks (fig.~S11) and find that the distributions with different aggregation times collapse onto a common baseline distribution in each empirical dataset (fig.~S12).
Under our model of constructing temporal networks we can understand that this robustness comes from the long-term regular activation of nodes and links (see Supplementary Material section 8 for detailed derivations).
The model also reproduces the empirical observation for stationary distributions  (fig.~S13).

The past ten years have shown increasing interest in understanding the effects of group interactions and higher-order interactions \cite{mayfield2017higher, paul2020role, battiston2021physics, alvarez2021evolutionary, lambiotte2019networks}, meaning behavioral activities that are not limited to just pairs of individuals.
A recent study has shown that higher-order interactions in empirical datasets display similar bursty behaviors to pairwise interactions \cite{cencetti2021temporal}.
And so a natural extension of this work is to study the activity of higher-order interactions in temporal networks. Our approach may provide a method to decompose networks into several elementary components, analogs of spanning trees in the context of hyper-graphs, which remains a direction for future research on the temporal dynamics in groups of interacting agents.

\section{Materials and methods}
\subsection{Mathematical formalization}
Here we provide a mathematical model of the two-node temporal network construction, which is a stochastic process $\{S_n \}_{n\ge 0}$ coupling the activity of every unit. Complete mathematical details about the existence of $\{S_n \}_{n\ge 0}$ and the modeling of other systems are provided in the Supplementary Material.

We follow the notation in the Two-node systems section.
According to the constraint $Z_n=X_n Y_n$, we construct a stochastic process $\{S_m \}_{m\ge 0}$, which for arbitrary sets of $t_1,...,t_k \in \mathbb{N}, k \in \Z^+$ satisfies
\begin{equation}
\begin{aligned}
	&\mu_{S_{2t_1},...,S_{2t_k}} = \mu_{X_{t_1},...,X_{t_k}}, \\
	&\mu_{S_{2t_1+1},...,S_{2t_k+1}} = \mu_{Y_{t_1},...,Y_{t_k}}, \\
	&\mu_{(S_{2t_1}\cdot S_{2t_1+1}),...,(S_{2t_k}\cdot S_{2t_k+1})} = \mu_{Z_{t_1},...,Z_{t_k}},
\end{aligned}
\end{equation}
and $S_0=S_1=1$. Here $\mu_{X_{t_1},...,X_{t_k}}$ represents the finite dimensional distribution of $\{S_m \}_{m\ge 0}$ at the time slice $(t_1,...,t_k)$.
$\{S_m \}_{m\ge 0}$ can be viewed as composing of the following sequence
\begin{equation}
	(S_0,S_1,...,S_{2n},S_{2n+1},...)=(X_0,Y_0,...,X_n,Y_n,...),
\end{equation}
and $Z_n = S_{2n} S_{2n+1}$. 
$\{S_m \}_{m\ge 0}$ follows the activation order in the main text (i.e., the state of $x$ is determined first).
If we exchange the order, the corresponding indexes in Eq.~8 and the sequence of $X_n$ and $Y_n$ in Eq.~9 are also swapped.

\subsection{System consistency}
The consistency of the two-node system is equivalent to the existence of $\{S_n \}_{n\ge 0}$.
If $\{S_n \}_{n\ge 0}$ is well-defined, the algorithmically produced IET distributions of $x,y,z$ will satisfy the target distributions $F,G,H$.

We derive the equivalence of the consistency condition for bursty and Poisson-like activity patterns in a two-node system.
Let $p_x^{(n)} = p_x(\textbf{0}^{(n-1)},1)$
denote the conditional probability that $x$ is active for the first time at $n$, where 
\begin{equation}
	\textbf{0}^{(m-1)} = (1,\underbrace{0,...,0}_{m-1})^{\text{T}}
\end{equation}
is a trajectory with length $m$ and only one active state occurring at the initial time.
For a power-law distribution with exponent $\alpha$, 
\begin{equation}
	p_x^{(n)} \approx 1 - (\frac{n+\frac{1}{2}}{n-\frac{1}{2}})^{-\alpha+1},
\end{equation}
and for a discrete exponential distribution with exponent $\alpha$,
\begin{equation}
	p_x^{(n)} = 1 - e^{-\alpha}.
\end{equation}
From the definition of the distribution consistency, the equivalence is given as
\begin{equation}
	\alpha_{node} \ge \alpha_{link},\ p_x^{(1)}  + p_y^{(2)}< 1, \ p_x^{(1)}  + p_y^{(1)} < 1 + p_z^{(1)}
\end{equation}
for bursty activity patterns and
\begin{equation}
	0\le \alpha_{node}-\alpha_{link} \le \ln 2
\end{equation}
for Poisson-like activity patterns.
Note that these consistency conditions ensure that the algorithm works for any length $t_{tol}$.
%If we consider specific length $t_{tol}$, the conditions can be relaxed accordingly.

\subsection{Construction of empirical temporal networks}
We generate an unweighted underlying topology $\mathcal{S}$ and a temporal network $\mathcal{T}$ for each empirical dataset.
We first determine the length of $\mathcal{T}$ by counting the number of timestamps in the dataset.
Then, the snapshot at time $t$ is formed by all contact events with the corresponding timestamps. 
Finally, we obtain $\mathcal{S}$ by aggregating all snapshots, that is, link $(i,j)$ exists on $\mathcal{S}$ if individuals $i$ and $j$ interact  at least once.

\newpage
\begin{center}
\Large\textbf{Supplementary Material}
\end{center}

\setcounter{equation}{0}
\setcounter{figure}{0}
\setcounter{section}{0}
\setcounter{table}{0}
\renewcommand{\thesection}{\arabic{section}}
\renewcommand{\thesubsection}{\arabic{section}.\arabic{subsection}}
\renewcommand{\theequation}{S\arabic{equation}}
\renewcommand{\thefigure}{S\arabic{figure}}
\renewcommand{\thetable}{S\arabic{table}}

\section{Renewal process} 

The activation process of a node/link can be naturally modeled as a point process.
The inter-event time (IET) of two consecutive activations is a random variable $\xi$, and we assume the IETs are independent of each other and are identically distributed with a nonnegative distribution. 
Such a point process is called a renewal process.
                                            
We set $\xi$ to be lattice with $d=1$, that is, the activation can only occur at positive integral moments, which leads to $\sum_{n=1}^{\infty} \prob(\xi=n) = 1$.
Therefore, the renewal process of a node/link is a discrete-time stochastic process $\{\Xi_n \}_{n\ge 0}$, where $\Xi_n = 1$ if the node/link is active at time $n$, otherwise $\Xi_n=0$. 

\section{Temporal network construction based on static networks}
We consider the construction of temporal networks based on a static undirected network. 
We use the probabilistic graphical model to describe the activation process of individual nodes/links in a network system at each time step and generate a stochastic process to formalize the construction process.

\subsection{Two-node systems}

\subsubsection{Theoretical analysis}
The elementary component of a network system is a two-node system, where two nodes $x$ and $y$ are connected by a link $z$ (see Fig.~1A in the main text).
The renewal processes of nodes $x$, $y$ and link $z$ are denoted as $\{X_n \}_{n\ge 0}$, $\{Y_n \}_{n\ge 0}$ and $\{Z_n \}_{n\ge 0}$.
The initial states of the nodes and the link are all active (i.e. $X_0=Y_0=Z_0=1$), and for simplicity, we omit the specification of the initial state in the rest of this Supplementary Material unless otherwise specified.  

The targeted IET distribution of node $x$ is a probability mass function $F(n)$,
\begin{equation}
	F(n) = \prob(X_0=1, X_1=0, X_2=0, ..., X_{n}=1), \quad n \in \mathbb{Z}^+,
\end{equation}
which represents the probability of the time interval between two consecutive activations. 
The probability mass function of node $y$ and link $z$ is denoted as $G(\Delta t)$ and $H(\Delta t)$, respectively. 
The trajectory of node $x$ ($y$ or $z$) until time $n$ is denoted as $\w_x^{(n)}=(w^{(0)}_{x},...,w^{(n)}_{x})^{\text{T}}$ ($\w_y^{(n)}=(w^{(0)}_{x},...,w^{(n)}_{x})^{\text{T}}$ or $\w_z^{(n)}=(w^{(0)}_{z},...,w^{(n)}_{z})^{\text{T}}$), recording all $n+1$ historical states.
We assume that link $z$ is active at time $n$ if and only if node $x$ and $y$ are all active at time $n$, which leads to 
\begin{equation*}
	\w_z^{(n)} = \w_x^{(n)}\circ \w_y^{(n)},
\end{equation*}
where the operation $\circ$ is the Hadamard product.

Given the trajectory $\w_x^{(n-1)}$, the conditional probability that node $x$ is active at time $n$ is denoted as $p_x(\w^{(n-1)}_x,1)$.
To calculate the probability $p_x(\w^{(n-1)}_x,1)$, we provide two identities by the property of renewal processes,
\begin{equation*}
\begin{aligned}
&\prob(X_n=1|X_{n-1}=0,...,X_1=0,X_0=1) = \frac{\prob(X_0=1, X_1=0, X_2=0, ..., X_{n}=1)}{\prob(X_0=1,X_1=0,...X_{n-1}=0)}= \frac{F(n)}{\sum_{i \ge n}F(i)}, \\
&\prob(X_n=a|X_m=1,X_{m-1}=w_{m-1},...,X_{1}=w_1,X_0=1)=\prob(X_{n-m}=a|X_0=1), \quad a\in \{0,1\}.	
\end{aligned}
\end{equation*}
The first equation shows the value of the conditional probability that $x$ is active at time $n$, given that $x$ is inactive at all previous moments (except the initial moment), and the second equation illustrates the state of $x$ at the current moment is determined by all historical states from the last activation moment. 

We denote the random vector $(X_0,X_1,...,X_n)^\text{T}$ by $X^{(n)}$, the conditional probability $p_x(\w^{(n-1)}_x,a)$ is given by
\begin{equation}
	p_x(\w^{(n-1)}_x,a) = \prob(X_{n}=1|X^{(n-1)}=\w^{(n-1)}_x)=
	\left \{
	\begin{aligned}
		&\frac{F(n-m)}{\sum_{i\ge n-m} F(i)},\quad &a=1, \\
		&\frac{\sum_{i\ge n-m+1}F(i)}{\sum_{i\ge n-m}F(i)}, &a=0, \\
	\end{aligned}
	\right.
\end{equation}
where $m=\max \{ k\le n: w_x^{(k)}=1 \}$ is the last activation moment of $x$. 
The same conclusions can be obtained for node $y$ and link $z$.

The construction algorithm of a two-node system is formalized by a stochastic process $\{S_n \}_{n\ge 0}$, satisfying for arbitrary sets of $t_1,...,t_k \in \mathbb{N}, k \in \Z^+$,
\begin{equation}
\begin{aligned}
	&\mu_{S_{2t_1},...,S_{2t_k}} = \mu_{X_{t_1},...,X_{t_k}}, \\
	&\mu_{S_{2t_1+1},...,S_{2t_k+1}} = \mu_{Y_{t_1},...,Y_{t_k}}, \\
	&\mu_{(S_{2t_1}\cdot S_{2t_1+1}),...,(S_{2t_k}\cdot S_{2t_k+1})} = \mu_{Z_{t_1},...,Z_{t_k}},
\end{aligned}
\end{equation} 
and $S_0=S_1=1$, where $\mu_{X_{t_1},...,X_{t_k}}$ represents the finite dimensional distribution of $\{S_n \}_{n\ge 0}$ at the time slice $(t_1,...,t_k)$.

To prove the existence of $\{S_n \}_{n\ge 0}$, we need to specify all finite dimensional distributions of $\{S_n \}_{n\ge 0}$. We assume $X_{n+1}$ and $Y^{(n)}$ be conditionally independent with respect to  $X^{(n)}$, that is, 
\begin{equation}
	\prob(X_{n+1},Y^{(n)}|X^{(n)}) = \prob(X_{n+1}|X^{(n)})\cdot \prob(Y^{(n)}|X^{(n)}),
\end{equation}
which means that the state of $x$ at the current moment is unrelated to the past of $y$.
By symmetry, we have 
\begin{equation*}
	\prob(Y_{n+1},X^{(n)}|Y^{(n)}) = \prob(Y_{n+1}|Y^{(n)})\cdot \prob(X^{(n)}|Y^{(n)}).
\end{equation*}
When $X_{n+1}=a$ and $X^{(n)}=\w^{(n)}_x$, using Eqs.~(S3) and (S4),
\begin{equation}
\begin{aligned}
 \prob (S_{2n+2}=a|S^{(2n+1)}) &= \prob(X_{n+1}=a|X^{(n)},Y^{(n)}) =\frac{\prob(X_{n+1}=a,Y^{(n)}|X^{(n)})\cdot \prob(X^{n})}{\prob(Y^{(n)}|X^{(n)})\cdot \prob(X^{n})} \\
&= \frac{\prob(X_{n+1}=a|X^{(n)})\cdot \prob(Y^{(n)}|X^{(n)})}{\prob(Y^{(n)}|X^{(n)})} =\prob(X_{n+1}=a|X^{(n)})\\
&=p_x(\w^{(n)}_x,a),
\end{aligned}
\end{equation}
and
\begin{equation}
\begin{aligned}
	\prob (S_{2n+3}=b|S^{(2n+2)}) &= \prob(Y_{n+1}=b|X^{(n+1)},Y^{(n)})= \frac{\prob(X^{(n+1)},Y^{(n+1)})}{\prob(X^{(n+1)},Y^{(n)})} \\
	&= \frac{\prob(X_{n+1}=a,Y_{n+1}=b|X^{(n)},Y^{(n)})\prob(X^{(n)},Y^{(n)})}{\prob(X_{n+1}=a|X^{(n)},Y^{(n)})\prob(X^{(n)},Y^{(n)})} \\
	&= \frac{\prob(X_{n+1}=a,Y_{n+1}=b|X^{(n)},Y^{(n)})}{p_x(\w^{(n)}_x,a)}.
\end{aligned}
\end{equation}
The numerator $\prob(X_{n+1},Y_{n+1}|X^{(n)},Y^{(n)})$ in Eq.~(S6) is related to $\w^{(n)}_x$, $\w^{(n)}_y$ and $\w^{(n)}_z$, which can be calculated as
\begin{subequations}
	\begin{align}
		\prob(X_{n+1}=1,Y_{n+1}=1|X^{(n)},Y^{(n)}) &= p_z(\w^{(n)}_z,1), \tag{S7a}\\
		\prob(X_{n+1}=1,Y_{n+1}=0|X^{(n)},Y^{(n)}) &= p_x(\w^{(n)}_x,1) - p_z(\w^{(n)}_z,1), \tag{S7b}\\
		\prob(X_{n+1}=0,Y_{n+1}=1|X^{(n)},Y^{(n)}) &= p_y(\w^{(n)}_y,1) - p_z(\w^{(n)}_z,1), \tag{S7c}\\
		\prob(X_{n+1}=0,Y_{n+1}=0|X^{(n)},Y^{(n)}) &= 1 + p_z(\w^{(n)}_z,1) - p_x(\w^{(n)}_x,1) - p_y(\w^{(n)}_y,1). \tag{S7d}
	\end{align}
\end{subequations}
When $\prob(B)=0$, we impose the conditional probability $\prob(A|B)=0$.
Using Eqs.~(S3)-(S7), we can calculate all finite dimensional distributions of $\{S_n \}_{n\ge 0}$ and verify that they satisfy the suitable consistency conditions. 
By the Kolmogorov extension theorem \cite{tao2011introduction}, a probability space $(\Omega,\mathcal{F},\prob)$ exists such that $\{S_n \}_{n\ge 0}$ is well-defined on this probability space.

When $F=G=H$, we obtain that Eqs.~(S7b) and (S7c) are equal to $0$, which indicates that the state of $x$, $y$, and $z$ is the same at any time step.
In this case, we claim that the system is synchronous.

\subsubsection{Consistency condition}
Mathematically, some distribution combinations $F,G,H$ cannot lead to a well-defined stochastic process $\{S_n \}_{n\ge 0}$.
The reason is that the values of Eqs.~(S7a)-(S7d) might be less than 0 and therefore violate the definition of probability measure.
Here, we propose a definition of distribution consistency $F, G, H$. 
\begin{definition}[Distribution consistency]
	The distributions $F(n)$, $G(n)$ and $H(n)$ are said to be consistent if the values Eqs.~(S7a)-(S7d) all belong to $[0, 1]$ for any possible $\w_x^{(n)}, \w_y^{(n)}$, $n\ge 0$.
\end{definition}
We also define the consistency of a two-node system.
\begin{definition}[Two-node system consistency]
The two-node system is said to be consistent if the targeted distribution of the two nodes and the link are consistent.
\end{definition}	
When a system is consistent, the IET distribution of each node (link) fulfills the targeted distribution.
Here, we first propose a necessary condition to verify system consistency with general distributions and analyze three classes of distributions -- power-law distributions, (discrete) exponential distributions, and Poisson distributions.
All conclusions are based on the premise that the length of temporal networks $t_{tol}$ is free.

The necessary condition is that the support set for $H$ needs to be a denumerable set, i.e., $\forall m \ge 1$, $\exists n \ge m$, $H(n)>0$.
If not, there exists a positive integer $m$, such that $H(m) > 0$ and $H(n)=0$, for all $n > m$.
When 
\begin{equation}
	\w_z^{(m-1)} = \textbf{0}^{(m-1)} = (1,\underbrace{0,...,0}_{m-1})^{\text{T}},
\end{equation}
the identity $p_z(\w_z^{(m-1)},1) = 1$ holds, meaning that link $z$ is almost surely active at time $m$.
However, it is impossible to find $F$ and $G$ such that $p_x(\w_x^{(m-1)},1) = p_y(\w_y^{(m-1)},1) = 1$ holds for all $\w_x^{(m-1)} \circ \w_y^{(m-1)} = \textbf{0}^{(m-1)}$. 
In other words, the probability $p_x(\w_x^{(m-1)},1)$ or $p_y(\w_y^{(m-1)},1)$ would be smaller than $p_z(\w_z^{(m-1)},1)$ under some trajectories, which results in Eq.~(S7b) or (S7c) being smaller than $0$.

Next, we consider that $F$, $G$ and $H$ are all power-law distributions, that is,
\begin{equation*}
	F(n) = C_1 n^{-\alpha_1},~G(n) = C_2 n^{-\alpha_2},~H(n) = C_3 n^{-\beta},
\end{equation*}
where $\alpha_1$, $\alpha_2$, $\beta$ are the exponents larger than $1$, and $C_i=(\sum_{n=1}^{\infty} n^{-\alpha_i})^{-1}$ ($i=1,2,3$) are the normalization constants.
The conditional probability of $x$ being active for the first time at time $n$ is given by
\begin{equation}
\begin{aligned}
	p_x^{(n)} = p_x(\textbf{0}^{(n-1)},1)  &= \frac{F(n)}{\sum_{i \ge n} F(i)} \approx \frac{C_1\int_{n-1/2}^{n+1/2} x^{-\alpha_1} dx}{C_1\int_{n-1/2}^{+\infty} x^{-\alpha_1} dx} \\
	&= \frac{(n-\frac{1}{2})^{-\alpha_1+1} - (n+\frac{1}{2})^{-\alpha_1+1}}{(n-\frac{1}{2})^{-\alpha_1+1}} = 1 - (\frac{n+\frac{1}{2}}{n-\frac{1}{2}})^{-\alpha_1+1}.
\end{aligned}
\end{equation} 
Equation (S9) shows that $p_x^{(n)}$ is monotonically decreasing/increasing with respect to $n$/$\alpha_1$, and when $n \to \infty$, $p_x^{(n)} \to 0$.
Since $w_x^{(i)} \ge w_z^{(i)}$ and $w_y^{(i)} \ge w_z^{(i)}$ for all $i\ge 0$, by Eqs.~(S7b) and (S7c), the exponent of the nodes need to be not less than that of the link, $\alpha_1 \ge \beta$ and $\alpha_2 \ge \beta$.
In addition, a necessary and sufficient condition for Eq.~(S7d) belonging to $[0,1]$ for all trajectories is 
\begin{equation*}
	p_x^{(1)}  + p_y^{(2)}< 1, \quad p_x^{(1)}  + p_y^{(1)} < 1 + p_z^{(1)}.
\end{equation*}
In particular, when all nodes have the same exponent $\alpha$, the first condition becomes $p_x^{(1)}  + p_x^{(2)}< 1$ and the second condition becomes $2C_1< 1 + C_3$. 

The exponential distribution we discuss is a discrete exponential distribution, and the analytical forms of $F, H, G$ are given by
\begin{equation*}
	 F(n) = C_1\int_{n-1/2}^{n+1/2} \lambda_1 e^{-\lambda_1 x} \d x,\ G(n) = C_2\int_{n-1/2}^{n+1/2} \lambda_2 e^{-\lambda_2 x} \d x,\ H(n) = C_3\int_{n-1/2}^{n+1/2} \lambda_3 e^{-\lambda_3 x} \d x,
\end{equation*}
where $\lambda_i$ ($i=1,2,3$) are the exponents and $C_i = e^{\frac{\lambda_i}{2}}$ ($i=1,2,3$) are the normalization constants. 
\begin{equation}
	p_x^{(n)} = \frac{C_1\int_{n-1/2}^{n+1/2} \lambda_1 e^{-\lambda_1 x} \d x}{C_1\int_{n-1/2}^{+\infty} \lambda_1 e^{-\lambda_1 x} \d x} = \frac{e^{-\lambda_1(n-\frac{1}{2})}-e^{-\lambda_1(n+\frac{1}{2})}}{e^{-\lambda_1(n-\frac{1}{2})}}=1-e^{-\lambda_1}. 
\end{equation}
Equation (S10) shows that $p_x^{(n)}$ is monotonically increasing with respect to $\lambda_1$ and independent to $n$. 
To satisfy the non-negativity of Eqs.~(S7a) and (S7b), the exponent of the nodes need to be not less than the link (i.e. $\lambda_1 \ge \lambda_3$, $\lambda_2 \ge \lambda_3$).
To satisfy the non-negativity of Eq.~(S7d), the necessary and sufficient condition is $e^{-\lambda_1} + e^{-\lambda_2} \ge e^{-\lambda_3}$. 
When two nodes have the same exponent, the conditions simplifies to $0\le \lambda_{node}-\lambda_{link} \le \ln 2$, which means nodes need to be more frequently active than links (the lower bound), while not too frequently (the upper bound).

Another common discrete distribution defined over $\mathbb{N}$ is the Poisson distribution. 
We prove that when $F$, $G$, and $H$ are all Poisson distributions, the system is inconsistent.
Since $n \ge 1$, we modify the definition domain of the Poisson distribution, $\prob(X=n) = e^{-\lambda} \frac{\lambda^{n-1}}{(n-1)!}$, $\forall n \in \mathbb{Z}^+$.
We have
\begin{equation}
	p_x^{(n)} = \frac{e^{-\lambda} \frac{\lambda^{n-1}}{(n-1)!}}{\sum_{i \ge n} e^{-\lambda} \frac{\lambda^{i-1}}{(i-1)!}} = \frac{1}{1 + \sum_{i\ge 1} \frac{\lambda^{i}}{(n+i)!/n!}} = \frac{1}{1 + \frac{\lambda}{n} + \frac{\lambda^2}{(n+1)n}+...}.
\end{equation}
Equation (S11) shows that $p_x^{(n)}$ is monotonically decreasing with respect to $\lambda$ and monotonically increasing with respect to $n$, when $n \to \infty$, $p_x^{(n)} \to 1$. 
Therefore, for all $\lambda_{node},\lambda_{link}$, a positive integer $K$ exists such that $p^{(K)}_H > p^{(1)}_F$, which indicates that the value of Eq.~(S7b) is less than 0.  

For a more general combination $(F, G, H)$, we need to calculate the result of Eq.~(S7) at each time step, and once one of the probabilities in Eq.~(S7) is less than 0, the construction stops, and the system is inconsistent.

\begin{algorithm}[t]
\caption{Construction on two-node systems} 
\hspace*{0.02in} {\bf Input:} 
IET distributions $F(\Delta t)$, $G(\Delta t)$, $H(\Delta t)$ and parameter $t_{tol}$\\
 \hspace*{0.02in} {\bf Output:}
trajectories $\w_x^{t_{tol}}$, $\w_y^{t_{tol}}$, $\w_z^{t_{tol}}$ or 0
\begin{algorithmic}[1]
\For{$t=1$ to $t_{tol}$}
	\State Compute $p(\w^{(t-1)}_x,1)$, $p(\w^{(t-1)}_y,1)$ and $p(\w^{(t-1)}_z,1)$
	\State Compute $p_1 = p_z(\w^{(t-1)}_z,1)$, $p_2 = p_x(\w^{(t-1)}_x,1) - p_z(\w^{(t-1)}_z,1)$, $p_3 = p_y(\w^{(t-1)}_y,1) - p_z(\w^{(t-1)}_z,1)$, and $p_4 = 1 + p_z(\w^{(t-1)}_z,1) - p_x(\w^{(t-1)}_x,1) - p_y(\w^{(t-1)}_y,1)$
	\If{$p_1 < 0$ or $p_2 < 0$ or $p_3 < 0$ or $p_4 < 0$}
		\State \Return 0
	\EndIf
	\State Choose two independent random numbers $p$, $q$ uniformly in $[0,1]$
	\If{$p < p(\w^{(t-1)}_x,1)$}
		\State $w_x^{(t)}=1$ 
		\State $r=p_1/p(\w^{(t-1)}_x,1)$
		\If{$q < r$}
			\State $w_y^{(t)}=1$ 
		\Else
			\State $w_y^{(t)}=0$
		\EndIf
	\Else
		\State $w_x^{(t)}=0$ 
		\State $r=p_3/(1-p(\w^{(t-1)}_x,1))$
		\If{$q < r$}
			\State $w_y^{(t)}=1$ 
		\Else
			\State $w_y^{(t)}=0$
		\EndIf
	\EndIf
	\State $w_z^{(t)}=w_x^{(t)}\cdot w_y^{(t)}$
\EndFor
\Return  $\w_x^{t_{tol}}$, $\w_y^{t_{tol}}$, $\w_z^{t_{tol}}$
\end{algorithmic}
\end{algorithm}

\subsubsection{Construction algorithm}
Algorithm 1 describes the procedure of two-node temporal network construction.
At each time step, we first determine the activation of $x$ which is only related to the trajectory of $x$.
Then we decide the state of $y$.
Finally, the state of $z$ is determined by the states of $x$ and $y$. 
If the return of the algorithm is $0$, then the system is inconsistent. 
Otherwise, the IET distributions of $x$, $y$ and $z$ fulfill $F$, $G$ and $H$.

\subsection{Tree systems}
A natural extension of two-node systems is tree systems (see Fig.~1B in the main text). 
Here we give a theoretical explanation and an explicit algorithm for the construction of tree structures.

\subsubsection{Theoretical analysis}
A tree system with $N$ nodes is denoted as $\G=(\V,\E)$, where $\V=\{0,...,N-1\}$ is a set of nodes and $\E=\{(i,j):i,j\in \V \}$ is a set of links.
The renewal process of node $i$ and link $(m,n)$ is denoted as $\{O_n^{(i)}\}_{n\ge 0}$ and $\{E_n^{(m,n)}\}_{n\ge 0}$, respectively.
We randomly select a node $r \in \V$ as the root of the tree and classify all other nodes according to their distance from $r$.
All nodes with $l$ steps away from $r$ compose the $l$th layer of the tree.
The number of layers is denoted as $l_{tol}$, and the number of nodes in each layer $l$ is denoted as $n_{l}$.

We define a mapping $v$ from the original set of nodes $\V$ to the new one,
\begin{equation*}
	v: j \to (l,m),
\end{equation*}
where $j \in \V $ is the original serial number, $l$ is the distance from $j$ to $\textbf{r}$ and $m$ is the number of $j$ in the layer $l$.
For example, we have $v(r)=(0,1)$.
We define a partial ordering $\preccurlyeq$ on $v(\V)$, the relation $(i_1,j_1)\preccurlyeq(i_2,j_2)$ holds when $i_1 < i_2$ or $i_1 = i_2, j_1 \le j_2$. 
Based on the relationship $\preccurlyeq$, we order the elements of $v(\V)$.
We use $\tilde{v}$ to represent this sorting,
\begin{equation*}
	\tilde{v}: v(\V) \to \V.
\end{equation*} 
The composite mapping $\hat{v}=\tilde{v} \circ v : \V \to \V$ represents the reordering of $\V$, and the result is denoted as $\widetilde{\V}$.

For the link set $\E$, we also define a reordering mapping $\hat{e}$,
\begin{equation*}
	\hat{e}: (x,y) \to  (p,q),
\end{equation*}
where $p=\hat{v}(x)$, $q=\hat{v}(y)$. 

We set $u = \hat{u}^{-1}$ and $f=\hat{e}^{-1}$. Similar to the two-node system, we construct a new stochastic process $\{T_n\}_{n\ge 0}$ that for arbitrary sets of $t_1,...,t_k \in \mathbb{N}$, $k \in \Z^+$ satisfies 
\begin{subequations}
	\begin{align}
		&\mu_{T_{Nt_1+i},...,T_{Nt_k+i}} = \mu_{O_{t_1}^{u(i)},...,O_{t_k}^{u(i)}},\quad \forall i\in \{0,...,N-1\}, \tag{S12a}\\
		&\mu_{(T_{Nt_1+i}\cdot T_{Nt_1+j}),...,(T_{Nt_k+i}\cdot T_{Nt_k+j})} = \mu_{E_{t_1}^{f(i,j)},...,E_{t_k}^{f(i,j)}},\quad \forall (i,j)\in \hat{e}(\E), \tag{S12b} \\
		& T_{k} = 1,\ \forall k \in \{0,...,N-1\}, \tag{S12c}
	\end{align}
\end{subequations}
where $\mu_{O_{t_1}^{u(i)},...,O_{t_k}^{u(i)}}$ and $\mu_{E_{t_1}^{f(i,j)},...,E_{t_k}^{f(i,j)}}$ are the finite dimensional distributions about node $u(i)\in \V$ and link $f(i,j)\in \E$ at the time slice $(t_1,...,t_k)$, respectively.

Equations (S12a)-(S12c) illustrate that the marginal distributions of $\{T_n\}_{n\ge 0}$ satisfy the corresponding distributions of single nodes or single links.
We can simply consider that $\{T_n\}_{n\ge 0}$ consists of the following sequence
\begin{equation*}
	(T_0,...,T_{N-1},...,T_{Nk},...,T_{N(k+1)-1},...) = (O_{0}^{u(0)},...,O_{0}^{u(N-1)},...,O_{k}^{u(0)},...,O_{k}^{u(N-1)},...),
\end{equation*}
and $T_{Nk+i}\cdot T_{Nk+j} = E_{k}^{f(i,j)}$.

We continue with the previous notation $T^{(Nk-1)}=(T_0,...,T_{Nk-1})$.
To prove the existence of $\{T_n\}_{n\ge 0}$, we specify all finite dimensional distributions of $\{T_n\}_{n\ge 0}$.
We assume the following conditional independence,
for all $k \in \mathbb{N}$,
\begin{equation}
	\prob(T_{Nk + i}|T^{(Nk-1)})=\prob(O_k^{u(i)}|O_{k-1}^{u(i)},...,O_{0}^{u(i)}),
\end{equation}
and for arbitrary mutually unequal $i_1,...,i_n,i,m \in \mathbb{N}$ that satisfy: (1) $(i_c,i) \notin \hat{e}(\E)$, $ \forall c\in \{1,...,k\}$, (2) $(i, m) \in \hat{e}(\E)$, we have 
\begin{equation}
	\prob(T_{Nk+i}|T_{Nk+m},T_{Nk+i_1},...,T_{Nk+i_n},T^{(Nk-1)}) = \prob(O_{k}^{u(i)}|O_{k-1}^{u(i)},...,O_{0}^{u(i)},O_{k}^{u(m)},...,O_{0}^{u(m)}).
\end{equation}
Equation (S13) illustrates that, given the historical trajectory of the other nodes, the activation of node $i$ at the current state is only related to its own trajectory.
Equation (S14) indicates that the activation of node $i$ only depends on the two-node system in which it is located, given the trajectory of another node in the same two-node system. 

A corollary of Eq.~(S14) is that  
for arbitrary mutually unequal $i_1,...,i_n,i,j,m \in \mathbb{N}$ that satisfy: (1) $(i_c,i),(i_c,j) \notin \hat{e}(\E)$, $\forall c\in \{1,...,k\}$, (2) $(i,m),(j,m) \in \hat{e}(\E)$, 
\begin{equation}
	\begin{aligned}
		\prob(T_{Nk+i},T_{Nk+j}|T_{Nk+m},T_{Nk+i_1},...,T_{Nk+i_n},T^{(Nk-1)}) &= \prob(O_{k}^{u(i)}|O_{k-1}^{u(i)},...,O_{0}^{u(i)},O_{k}^{u(m)},...,O_{0}^{u(m)}) \\
		&\times \prob(O_{k}^{u(j)}|O_{k-1}^{u(j)},...,O_{0}^{u(j)},O_{k}^{u(m)},...,O_{0}^{u(m)}).
	\end{aligned}
\end{equation}
Equation (S15) demonstrates the conditional independence for the triplets (r,a1,b1) and (r,a1,a2) shown in Fig.~1B in the main text. 

Using Eqs.~(S12)-(S14), we can calculate all finite dimensional distributions of $\{T_n\}_{n\ge 0}$, then the existence of $\{T_n\}_{n\ge 0}$ can be proved with the Kolmogorov extension theorem \cite{tao2011introduction}.

\subsubsection{Consistency condition}
We propose the definition of the consistency of a tree system.
\begin{definition}[Tree system consistency]
	A tree system is said to be consistent if all two-node systems in the tree system are consistent.
\end{definition}
This definition shows that the consistency of a tree system is determined by the consistency of each two-node system.

Similar to two-node systems, we also have equivalents for the system consistency under different activity patterns.
When the targeted distributions are all power-law distributions, the necessary and sufficient condition is that each two-node system satisfies $p_x^{(1)}  + p_y^{(2)}< 1$, $p_x^{(1)}  + p_y^{(1)} < 1 + p_z^{(1)}$, and the exponents of nodes are larger than the link.
When the targeted distributions are discrete exponential distributions, the necessary and sufficient condition is that the difference between the exponents of each node and each link  is between 0 and $\ln 2$ in each two-node system.
When the targeted distributions are Poisson distributions, the tree system is inconsistent.

\begin{algorithm}[t]
\caption{Construction on tree systems} 
\hspace*{0.02in} {\bf Input:} 
tree structure $\mathcal{T}$, IET distributions for all nodes and links, and parameter $t_{tol}$\\
 \hspace*{0.02in} {\bf Output:}
trajectories of all nodes and links
\begin{algorithmic}[1]
\For{$t=1$ to $t_{tol}$}
	\State Select the root $r$ and label all other nodes as $(l,m)$ according to the distance from $r$
	\State $p_{\textbf{{r}}}=p(\w^{(t-1)}_{(0,1)},1)$ ($l=0$)
	\State Select a random number $q$ uniformly in $[0,1]$
	\If{$q < p_{\textbf{{r}}}$}
		\State $w_{\textbf{r}}^{(t)}=1$ 
	\Else
		\State $w_{\textbf{r}}^{(t)}=0$
	\EndIf
	\For{$l=1$ to $l_{tol}$}
		\For{$m=1$ to $n_l$}
			\State Search $k$, such that $x=v^{-1}(l-1,k)$, $y=v^{-1}(l,m)$, $z=(x,y)\in \E$
			\State Execute a single loop of Algorithm 1 on the two-node system consisting of
			\State the nodes $x,y$ and the link $z$
		\EndFor		
	\EndFor	
\EndFor
\Return  trajectories of all nodes and links
\end{algorithmic}
\end{algorithm}

\subsubsection{Construction algorithm}
Algorithm 2 shows the construction of temporal networks on tree systems. 
At each time step, the state of the root $r$ is updated according to its own trajectory.
Then the state of leaves is updated sequentially by executing Algorithm 1 on every two-node system. 
For each loop in lines 10-13, the state of all nodes on layer $l$ is updated. 
Specifically, to determine the state of $y$ on layer $l$, we first identify the node $x$ on layer $l-1$, which is directly connected to $y$.
At this point, the state of $x$ has already been updated.
Then, the state of $y$ is updated within the two-node system formed by $x$ and $y$.

\subsection{Arbitrary structured systems}
We can always find a spanning tree for any network structure.
We call a link a \emph{trunk} if it is in the spanning tree, a \emph{branch} otherwise.
The activity of a network system is determined by its spanning tree.
Algorithm 3 shows the construction of temporal networks on arbitrary underlying topologies $\G$.
In the algorithm, we first select a spanning tree of $\G$.
At each time step, we execute Algorithm 2 on the spanning tree, so that the states of all nodes and trunks are updated.
Then, the state of each branch is determined according to the state of the nodes on both ends.

We claim that the system is consistent if and only if its spanning tree system is consistent.
In particular, when all nodes and trunks have the same targeted IET distribution, the whole system is synchronous, and therefore all nodes and links (including branches) always become active or inactive at the same time.
This property ensures that, for any targeted distributions of nodes and links (fulfilling the consistency condition), we can always find inputs for nodes and trunks, such that the algorithmic IET distributions match with the target ones. 

\begin{algorithm}[t]
\caption{Construction based on arbitrary network systems} 
\hspace*{0.02in} {\bf Input:} 
network $\G$, IET distributions of all nodes and trunks, and parameter $t_{tol}$ \\
 \hspace*{0.02in} {\bf Output:} 
 trajectories of all nodes and links
\begin{algorithmic}[1]
\State Select a spanning tree $\mathcal{T}$ of $\G$
\State Assign a probability mass function to each node and each link in $\mathcal{T}$
\For{$t=1$ to $t_{tol}$}  
	\State Execute a single loop of Algorithm 2 for $\mathcal{T}$
	\State Update the state of each branch according to the state of the nodes on both sides
\EndFor
\Return  trajectories of all nodes and links
\end{algorithmic}
\end{algorithm}

\subsection{Systems with ring structure}
A significant property of a tree system $\mathcal{T}=(\V, \E)$ is that there is no ring structure, that is, for all $i_1,...,i_m \in \V$, $a_{i_1 i_2}\times ... \times a_{i_{m-1} i_{m}} \times a_{i_{m}i_1} = 0$, where $a_{ij}=1$ if $(i,j)\in \E$, otherwise $a_{ij}=0$.
For a general network structure, the ring structure exists and one of the links in the ring would be a branch.

Below, we analyze the activity of branches.  
We consider the simplest system with a ring, which consists of three nodes, $\{x,y,z\}$, and three links, $\{(x,y),(y,z),(z,x)\}$.
The renewal process of nodes $x$, $y$ and $z$ is denoted as $\{X_n \}_{n\ge 0}$, $\{Y_n \}_{n\ge 0}$ and $\{Z_n \}_{n\ge 0}$.
We set $x$ as the root $r$.
At time $n+1$, the probability of $x$ being active is $p (\w^{(n)}_x,1)$ and the result is denoted as $a$.
To make the IET distribution of node $y$ and link $(x,y)$ fulfill targeted distributions, the probability of $y$ being active is 
\begin{equation*}
	\frac{\prob(X_{n+1}=a,Y_{n+1}=1|X^{(n)}=\w^{(n)}_x,Y^{(n)}=\w^{(n)}_y)}{p (\w^{(n)}_x,a)}.
\end{equation*}
This result is denoted as $b$.
Then, we can calculate the probability of $z$ being active through the structure $\{y,(y,z), z\}$ or $\{x,(x,z), z\}$.
The former leads to the probability
\begin{equation}
	\frac{\prob(Y_{n+1}=b,Z_{n+1}=1|Y^{(n)}=\w^{(n)}_y,Z^{(n)}=\w^{(n)}_z)}{p (\w^{(n)}_y,b)},
\end{equation}
and the latter leads to the probability
\begin{equation}
	\frac{\prob(X_{n+1}=a,Z_{n+1}=1|X^{(n)}=\w^{(n)}_x,Z^{(n)}=\w^{(n)}_z)}{p (\w^{(n)}_x,a)}.
\end{equation}
These two probabilities are not always the same.
Using Eq.~(S16) (Eq.~(S17)) to obtain the state of $z$ is equivalent to the select the spanning tree with links $(x,y), (y,z)$ ($(x,y), (x,z)$), and the state of the remaining link (i.e. the branch) is then established.

Furthermore, in general, the activity of branches is not a (strict) renewal process.
Without loss of generality, we assume that links $(x,y), (x,z)$ are in the spanning tree.
We denote the stochastic process of link $(y,z)$ as $\{S_n \}_{n\ge 0}$.
To prove the above assertion, we verify a sufficient condition
\begin{equation}
	\prob(S_2=1|S_1=1) \neq \prob(S_1=1).
\end{equation}
The left-hand side of Eq.~(S18) equals 
\begin{equation}
	\frac{1}{\prob(S_1=1)}\sum_{x_1,x_2\in \{0, 1\}} \prob(X_1=x_1,Y_1=1,Z_1=1,X_2=x_2,Y_2=1,Z_2=1).
\end{equation}
The numerator of Eq.~(S19) equals 
\begin{equation}
\begin{aligned}
	\sum_{x_1\in \{0,1\}} \prob(X_1=x_1,Y_1=1,Z_1=1) \sum_{x_2\in \{0,1\}} \left(\substack{\prob(X_2=x_2,Y_2=1|X_1=x_1,Y_1=1)  \\
	\times \prob(X_2=x_2,Z_2=1|X_1=x_1,Z_1=1)} \right) / \prob(X_2=x_2|X_1=x_1)
\end{aligned}
	.
\end{equation}
In general, Eq.~(S20) is not equal to 
\begin{equation*}
	\sum_{x_1\in \{0,1\}} \prob(X_1=x_1,Y_1=1,Z_1=1) \sum_{x_2\in \{0,1\}} \prob(X_1=x_2,Y_1=1,Z_1=1) = \prob(S_1=1)^2,
\end{equation*}
which means that Eq.~(S18) holds.
However, we can still use renewal processes to approximate the activity of branches (see fig.~S6 and section 4 for reasons)

\section{Robustness analysis}
In the main text, we find that the aggregated IET distributions of nodes and links are robust to underlying topologies and the selection of spanning trees.
The main reason is that the IET distribution of branches is only sensitive to targeted distributions.
Specifically, we prove that the algorithmic distribution of every single branch is an exponential distribution (a heavy-tailed distribution) in Poisson-like activity patterns (bursty activity patterns) with upper and lower bounds determined by algorithm inputs.
Here, we provide theoretical explanations.
\subsection{Activity of branches in Poisson-like activity patterns}
We begin our analysis with a three-node ring consisting of three nodes $x,y,z$ and three links $(x,y)$,$(y,z)$,$(z,x)$.
We select a spanning tree with links $(x,y)$ and $(y,z)$, and the root is node $x$.
The renewal process of nodes $x,y,z$ is denoted as $\{X_n \}_{n\ge 0}$, $\{Y_n \}_{n\ge 0}, \{Z_n \}_{n\ge 0}$, and the exponent of nodes $x,y,z$ (links $(x,y), (x,z)$) is $\lambda_x, \lambda_y, \lambda_z$ ($\lambda_{xy}, \lambda_{xz}$).
For simplicity, we make the following assumption:

\emph{Assumption 1:} $\lambda_x=\lambda_y=\lambda_z=\lambda_1 > 0$ and $\lambda_{xy}=\lambda_{xz}=\lambda_2 > 0$.

Considering the consistency condition for Poisson-like activity patterns, the relation $0 \le \lambda_1-\lambda_2 \le \ln 2$ holds.
The stochastic process of link $(x,z)$ is denoted as $\{S_n \}_{n\ge 0}$.

We define a random variable
\begin{equation*}
	\tau_{x} = \inf\{n\ge 1:X_n=1\},
\end{equation*}
which is a stopping time for node $x$, indicating the first activation time of $x$.
Similarly, the stopping times for nodes $y,z$ and link $(x,z)$ are denoted as $\tau_y$, $\tau_z$ and $\tau_s$, respectively.
We approximate the activity of $(x,z)$ by a renewal process with the distribution of $\tau_s$ as its IET distribution.

We first prove the following theorem,
\begin{theorem}
	For a three-node ring satisfying Assumption 1, 
	\begin{equation*}
		\prob(\tau_s > n) = e^{-\mu n},
	\end{equation*}
	where $\mu > 0$.
\end{theorem}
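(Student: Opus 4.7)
The plan is to exploit the memoryless property of the discrete exponential distribution, which trivialises the tree construction in this regime. Under Assumption~1 the target IETs are discrete exponential with parameters $\lambda_1$ (nodes) and $\lambda_2$ (trunks), and Eq.~(S10) gives $p_x(\w_x^{(n-1)},1)=1-e^{-\lambda_1}$ and $p_{(x,y)}(\w_{(x,y)}^{(n-1)},1)=1-e^{-\lambda_2}$ for every admissible trajectory and every $n\ge 1$; analogous identities hold for $y,z$ and for the trunk $(y,z)$. Substituting these constants into Eq.~(S7) shows that all four joint transition probabilities of the two-node system on $x,y,(x,y)$ are constants depending only on $\lambda_1,\lambda_2$, and the same is true for the two-node system on $y,z,(y,z)$.

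Next I would combine these constants with the conditional-independence assumption used in the tree construction (Eq.~(S14)). Because Algorithm~2 updates $x$ first, then $y$ given $x$, then $z$ given $y$, the joint law at each step factorises as
\begin{equation*}
	\prob(X_n=a,Y_n=b,Z_n=c\mid \mathcal{H}_{n-1}) = \prob(X_n=a\mid\mathcal{H}_{n-1})\,\prob(Y_n=b\mid X_n=a,\mathcal{H}_{n-1})\,\prob(Z_n=c\mid Y_n=b,\mathcal{H}_{n-1}),
\end{equation*}
and each of the three factors equals one of the history-independent constants from the previous paragraph. Hence the triple $(X_n,Y_n,Z_n)$ is independent of the past $\mathcal{H}_{n-1}$ and has a common joint law on $\{0,1\}^3$ determined entirely by $\lambda_1,\lambda_2$; the triples $\{(X_n,Y_n,Z_n)\}_{n\ge 1}$ are therefore i.i.d.

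With i.i.d.\ triples in hand the theorem follows by direct factorisation: since $S_n=X_nZ_n$, we have $\{\tau_s>n\}=\bigcap_{k=1}^{n}\{X_kZ_k=0\}$, so $\prob(\tau_s>n)=q^n$ with $q=\prob(X_1Z_1=0)$, which is exactly the claimed form with $\mu=-\ln q$. To verify $\mu>0$ it suffices to exhibit $\prob(X_1=1,Z_1=1)>0$; restricting to $Y_1=1$ contributes $(1-e^{-\lambda_1})\bigl[(1-e^{-\lambda_2})/(1-e^{-\lambda_1})\bigr]^2$, which is strictly positive whenever $\lambda_2>0$, and the consistency condition $0\le\lambda_1-\lambda_2\le\ln 2$ guarantees that all of the remaining conditional probabilities in Eq.~(S7) stay in $[0,1]$ so that the construction is well-posed. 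The only obstacle worth worrying about is the bookkeeping in the second step: one must check that the tree's ordered conditional-independence hypothesis really does decouple the $Z$-update from $X$ given $Y$, i.e.\ that $(x,z)\notin\hat e(\E)$ is precisely the input needed to invoke Eq.~(S14), so that the three-factor product above is valid and the induction of the i.i.d.\ property goes through.
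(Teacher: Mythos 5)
Your proposal is correct and takes essentially the same route as the paper: both arguments rest on the observation that, for discrete exponential targets, every one-step conditional probability in Eqs.~(S7) is a history-independent constant, so that $\prob(\tau_s>n)$ obeys the geometric recursion $\prob(\tau_s>n)=C\,\prob(\tau_s>n-1)$. The paper derives this by summing over trajectories and peeling off the last time step (Eqs.~(S21)--(S23)) rather than by packaging the constants as i.i.d.\ triples, but the substance is identical, and your positivity check via $\prob(X_1=Y_1=Z_1=1)=(1-e^{-\lambda_2})^2/(1-e^{-\lambda_1})>0$ plays the same role as the paper's case-by-case verification that $C<1$.
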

If Theorem 1 holds, the algorithmic IET distribution of the branch is also exponential.
\begin{proof}
	By the definition of $\tau_s$, we have
	\begin{equation*}
	\begin{aligned}
		\prob(\tau_s > n) &= \sum_{\a^{(n)} \circ \c^{(n)} = \textbf{0}^{(n)}} \prob(X^{(n)} =\a^{(n)}, Z^{(n)}=\c^{(n)}) \\
		&= \sum_{\a^{(n)} \circ \c^{(n)} = \textbf{0}^{(n)}} \sum_{\b^{(n)} \in \{0, 1\}^{n+1}} \prob(X^{(n)}=\a^{(n)}, Y^{(n)}=\b^{(n)}, Z^{(n)}=\c^{(n)}),
	\end{aligned}
	\end{equation*} 
	where $\a^{(n)}=(a_0,...,a_n)^{\text{T}}$, $\b^{(n)}=(b_0,...,b_n)^{\text{T}}$, and $\c^{(n)}=(c_0,...,c_n)^{\text{T}}$  are the trajectories of $x,y,z$ with length $n+1$, the definition of $\textbf{0}^{(n)}$ is the same as that in  Eq.~(S8), and the operation $\circ$ is the Hadamard product.
	
	For each fixed $\a^{(n)},\b^{(n)},\c^{(n)}$, using Eqs.~(S13) and (S14), we have 
\begin{equation}
\begin{aligned}
		\prob(X^{(n)}=\a^{(n)}, Y^{(n)}=\b^{(n)}, Z^{(n)}&=\c^{(n)})= \prob(X_n=a_n|X^{(n-1)}=\a^{(n-1)})\\
		&\times \prob(Y_n=b_n|X^{(n)}=\a^{(n)}, Y^{(n-1)}=\b^{(n-1)})\\
		&\times \prob(Z_n=c_n|Y^{(n)}=\b^{(n)}, Z^{(n-1)}=\c^{(n-1)})\\
		&\times \prob(X^{(n-1)}=\a^{(n-1)}, Y^{(n-1)}=\b^{(n-1)}, Z^{(n-1)}=\c^{(n-1)}).
\end{aligned}
\end{equation}
This gives
\begin{equation}
	\begin{aligned}
			\prob(\tau_s > n) &= \sum_{\a^{(n-1)} \circ \c^{(n-1)} = \textbf{0}^{(n-1)}} \sum_{\b^{(n-1)} \in \{0, 1\}^{n}}  \prob(X^{(n-1)}=\a^{(n-1)}, Y^{(n-1)}=\b^{(n-1)}, Z^{(n-1)}=\c^{(n-1)}) \\
			&\times \sum_{a_n \times c_n = 0} \sum_{b_n \in \{0, 1\}} \left( \prob(X_n=a_n|X^{(n-1)}=\a^{(n-1)}) \right.\\
			&\times \prob(Y_n=b_n|X^{(n)}=\a^{(n)}, Y^{(n-1)}=\b^{(n-1)}) \\
			&\left. \times \prob(Z_n=c_n|Y^{(n)}=\b^{(n)}, Z^{(n-1)}=\c^{(n-1)}) \right).
	\end{aligned}
\end{equation}
Using Eqs.~(S5)-(S7) and (S10), we have
\begin{equation}
	\begin{aligned}
		\sum_{a_n \times c_n = 0} \sum_{b_n \in \{0, 1\}} &\prob(X_n=a_n|X^{(n-1)}=\a^{(n-1)}) \\
		&\times \prob(Y_n=b_n|X^{(n)}=\a^{(n)}, Y^{(n-1)}=\b^{(n-1)}) \\
		 &\times \prob(Z_n=c_n|X^{(n)}=\c^{(n)}, Z^{(n-1)}=\b^{(n-1)}) \\
		 &= e^{-\lambda_1} + \frac{(1-e^{-\lambda_2})(e^{-\lambda_2} - e^{-\lambda_1})}{1-e^{-\lambda_1}} + \frac{(e^{-\lambda_2} - e^{-\lambda_1})(2e^{-\lambda_1} - e^{-\lambda_2})}{e^{-\lambda_1}} \\
		 &:= C(\lambda_1, \lambda_2). 
	\end{aligned}
\end{equation} 
This leads to
\begin{equation*}
	\prob(\tau_s > n) = C \prob(\tau_s > n-1).
\end{equation*}

Since $\prob(\tau_s > 0)$ = 1, we have
\begin{equation*}
	\prob(\tau_s > n) = C^n.
\end{equation*}

In this simplest ring, we can also analyze the decay of $\prob(\tau_s > n)$ in three cases.

Case 1: When $\lambda_1 = \lambda_2$, we have $C = e^{-\lambda_1}$, which indicates that 
\begin{equation*}
	\prob(\tau_s > n) = e^{-\lambda_1 n} = \prob(X^{(n)}=\textbf{0}^{(n)}, Y^{(n)}=\textbf{0}^{(n)}, Z^{(n)}=\textbf{0}^{(n)}).
\end{equation*} 
This identity can also be obtained by the synchronization of the system because all elements have the same targeted distribution.

Case 2: When $\lambda_1 - \lambda_2=\ln 2$, we have
\begin{equation*}
	C=\frac{e^{-\lambda_2}}{2} + \frac{(1-e^{-\lambda_2})e^{-\lambda_2}}{2-e^{-\lambda_2}} < 1,
\end{equation*}
meaning that the decay of $\prob(\tau_s > n)$ is exponential.

Let $C = e^{-\mu}$, 
we compare the magnitude of $\mu(\lambda_2)$ and $\lambda_2$. 
Since
\begin{equation*}
	\frac{\d (\mu(\lambda_2)-\lambda_2)}{\d \lambda_2} = -\frac{2 e^{\lambda_2}}{8e^{2\lambda_2} - 10 e^{\lambda_2} + 3} < 0.
\end{equation*}
We get 
\begin{equation*}
	\mu(\lambda_2)-\lambda_2 > 0, \quad \mu(\lambda_2) \to \lambda_2 ~\text{as}~ \lambda_2 \to \infty.
\end{equation*}
This indicates that $\prob(\tau_s > n)$ is also an exponential distribution, and the corresponding exponent is larger than that of single trunks, $\lambda_2$.

Case 3: When $0 < \lambda_1 - \lambda_2 < \ln 2$, let $x = e^{-\lambda_1}, y=e^{-\lambda_2}$, then
\begin{equation*}
	\partial_y C(x,y) = \frac{2(x^2-2x+y)}{x(x-1)}.
\end{equation*}
This gives
\begin{equation*}
	\max_{y} C(x,y) = (2-x)x < 1.
\end{equation*}

Similar to Case 2, let $C = e^{-\mu}$, 
\begin{equation}
	\frac{\d (\mu(\lambda_2)-\lambda_2)}{\d \lambda_2} = \frac{e^{3\lambda_1} + e^{2\lambda_2} - 2e^{\lambda_1 + 2\lambda_2}}{e^{3\lambda_1} - e^{2\lambda_2}+2e^{\lambda_1+\lambda_2} - 4e^{2\lambda_1 + \lambda_2}+2e^{\lambda_1+2\lambda_2}}.
\end{equation}
For each fixed $\lambda_1$, the numerator and denominator of Eq.~(S24) are quadratic functions with respect to $e^{\lambda_2}$.
Considering $e^{\lambda_1}/2<e^{\lambda_2}<e^{\lambda_1}$, when $\lambda_1 > \ln(3/2)$, 
\begin{equation*}
	e^{3\lambda_1} - e^{2\lambda_2}+2e^{\lambda_1+\lambda_2} - 4e^{2\lambda_1 + \lambda_2}+2e^{\lambda_1+2\lambda_2} < 0.
\end{equation*}
Then 
\begin{equation*}
	\max_{\lambda_2} (\mu(\lambda_2)-\lambda_2) < \max\{\mu(\lambda_1)-\lambda_1, \mu(\lambda_1-\ln 2)-\lambda_1 + \ln 2\} = 0.
\end{equation*}
This indicates that the exponent of $\prob(\tau_s > n)$ is lower than $\lambda_2$.

\end{proof}

For a general ring $\mathcal{R}=(\V, \E)$, $\V= \{1,...,m\}$ and $\E=\{(i,j): i,j\in \V, |i-j|=1\} \cup \{(1, m) \}$.
The renewal process of individual nodes in $\V$ are denoted as $\{O_n^{(1)} \}_{n\ge 0}$,$...$, $\{O_n^{(m)}\}_{n\ge 0}$.
We add all links into the spanning tree except link $(1,m)$, and we denote the stochastic process of  $(1,m)$ by $\{S_n \}_{n\ge 0}$. 
The first activation time of branch $(1,m)$ is also denoted as $\tau_s$.
We can prove the following theorem 
\begin{theorem}
	For a general ring with $m$ nodes, if every single node (trunk) has the same exponent, then
	\begin{equation*}
		\prob(\tau_s > n) = e^{-\mu n},
	\end{equation*}
	where $\mu > 0$.
\end{theorem}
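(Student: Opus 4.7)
The plan is to extend the recursion $\prob(\tau_s > n) = C \cdot \prob(\tau_s > n-1)$ obtained in the proof of Theorem~1 to a ring of arbitrary length $m$. The essential tool will again be the memoryless property of the discrete exponential distribution: Eq.~(S10) gives $p_x(\w,1) = 1 - e^{-\lambda_1}$ for nodes and $1 - e^{-\lambda_2}$ for trunks, independent of the trajectory $\w$. First I would write, in analogy with Eq.~(S22),
\begin{equation*}
\prob(\tau_s > n) = \sum_{\substack{\a_1,\ldots,\a_m \in \{0,1\}^{n+1} \\ a_{1,k}\, a_{m,k}=0,\ k=1,\ldots,n}} \prob\bigl(O_k^{(i)} = a_{i,k},\ \forall\, i,k\bigr),
\end{equation*}
and then factorize the joint probability using the tower property together with the spanning-tree conditional independence Eqs.~(S13)--(S15) applied along the spanning path $1\text{--}2\text{--}\cdots\text{--}m$.

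Once factorized, every one-step transition from time $k-1$ to $k$ is a product of conditional probabilities of the form in Eqs.~(S7a)--(S7d), one for each trunk of the path. By memorylessness, each such factor depends only on the states at time $k$ and not on any earlier history. Consequently, marginalizing the state at the last time step over all configurations with $o_1\, o_m = 0$ produces a constant $C$ independent of the history, namely
\begin{equation*}
C = \sum_{(o_1,\ldots,o_m) \in \{0,1\}^m :\, o_1 o_m = 0} \pi(o_1,\ldots,o_m),
\end{equation*}
where $\pi$ is the single-step joint distribution of $(O^{(1)}_k,\ldots,O^{(m)}_k)$ induced by the path update. Iterating the one-step factorization then gives $\prob(\tau_s > n) = C^n = e^{-\mu n}$ with $\mu = -\log C$.

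It remains to verify $\mu > 0$, equivalently $\pi(o_1 = o_m = 1) > 0$. The all-active configuration $(1,\ldots,1)$ already suffices: chaining the two-node marginals $\prob(O^{(i)}=1,\, O^{(i+1)}=1) = 1 - e^{-\lambda_2}$ and $\prob(O^{(i)}=1) = 1 - e^{-\lambda_1}$ along the spanning path yields
\begin{equation*}
\pi(1,\ldots,1) = (1-e^{-\lambda_1})\left(\frac{1-e^{-\lambda_2}}{1-e^{-\lambda_1}}\right)^{m-1} > 0
\end{equation*}
under the consistency constraint $0 \le \lambda_1 - \lambda_2 \le \ln 2$. The main obstacle I anticipate is the bookkeeping needed to confirm that every one-step factor genuinely decouples from the joint history of all $m$ nodes simultaneously: Eqs.~(S13)--(S15) address only triples of neighboring nodes, so a short induction on $m$ -- treating $(1,\ldots,m-1)$ as an already-factorized sub-path and attaching node $m$ via a single two-node update governed by Eq.~(S14) -- will be used to propagate the memoryless property along the full ring.
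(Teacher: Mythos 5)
Your proposal follows essentially the same route as the paper's proof: factorize the joint law along the spanning path with node $1$ as root, use the memorylessness of the discrete exponential (Eq.~(S10)) to show that marginalizing the last time step yields a history-independent constant $C$, and conclude $\prob(\tau_s>n)=C^n$. The only minor difference is the final step establishing $\mu>0$: the paper deduces $C<1$ from the strict decrease of $\prob(\tau_s>n)$, while you lower-bound the complementary probability by the all-active configuration $(1-e^{-\lambda_1})\bigl(\tfrac{1-e^{-\lambda_2}}{1-e^{-\lambda_1}}\bigr)^{m-1}>0$ --- an equivalent and slightly more explicit argument.
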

\begin{proof}
	We denote the exponent of every single node (trunk) as $\lambda_1$ ($\lambda_2$). 
Let $O^{(j,n)}$ and $\i^{(k,n)}$ represent $(O_0^{(j)},...,O_n^{(j)})$ and $(i_0^{(k)},...,i_n^{(k)})$, respectively.
We have
\begin{equation*}
\begin{aligned}
	\prob(\tau_s > n) &= \sum_{\i^{(1,n)} \circ \i^{(m,n)} = \textbf{0}^{(n)}} \prob(O^{(1,n)} =\i^{(1,n)}, O^{(m,n)}=\i^{(m,n)}) \\
	&= \sum_{\i^{(1,n)} \circ \i^{(m,n)} = \textbf{0}^{(n)}} \sum_{ \substack{\i^{(2,n)},...,\i^{(m-1,n)} \\ \in \{0,1\}^{n+1}}} \prob(O^{(1,n)} =\i^{(1,n)}, ..., O^{(m,n)}=\i^{(m,n)}).
\end{aligned}
\end{equation*}

For each fixed $\i^{(1,n)},...,\i^{(m,n)}$, similar to Eq.~(S21), we have 
\begin{equation}
	\begin{aligned}
		\prob(O^{(1,n)} =\i^{(1,n)}, ..., O^{(m,n)}=\i^{(m,n)}) &= \prob(O^{(1)}_n =i_{n}^{(1)}|O^{(1,n-1)} = \i^{(1,n-1)}) \\
		&\times \prob(O_n^{(2)}=i_n^{(2)}|O^{(1,n)}=\i^{(1,n)},O^{(2,n-1)}=\i^{(2,n-1)}) \\
		&\times ... \\
		&\times \prob(O_n^{(m)}=i_n^{(m)}|O^{(m-1,n)}=\i^{(m-1,n)},O^{(m,n-1)}=\i^{(m,n-1)}) \\
		&\times \prob(O^{(1,n-1)} =\i^{(1,n-1)}, ..., O^{(m,n-1)}=\i^{(m,n-1)}),
	\end{aligned}
\end{equation}
meaning that node $1$ is the root of the spanning tree.
This gives
\begin{equation}
	\begin{aligned}
			\prob(\tau_s > n) &= \sum_{\substack{\i^{(1,n-1)} \\ \circ  \i^{(m,n-1)} = \textbf{0}^{(n-1)}}} \sum_{\substack{\i^{(2,n-1)},...,\i^{(m-1,n-1)} \\ \in \{0,1\}^{n}}} \prob(O^{(1,n-1)} =\i^{(1,n-1)}, ..., O^{(m,n-1)}=\i^{(m,n-1)}) \\
			&\times \sum_{i_n^{(1)} \times i_n^{(m)} = 0} \sum_{i_n^{(2)},..., i_n^{(m-1)} \in \{0, 1\}} \prob(O^{(1)}_n =i_{n}^{(1)}|O^{(1,n-1)} = \i^{(1,n-1)}) \\
			&\times \prob(O_n^{(2)}=i_n^{(2)}|O^{(1,n)}=\i^{(1,n)},O^{(2,n-1)}=\i^{(2,n-1)}) \\
			&\times ... \\
			&\times \prob(O_n^{(m)}=i_n^{(m)}|O^{(m-1,n)}=\i^{(m-1,n)},O^{(m,n-1)}=\i^{(m,n-1)}). \\
	\end{aligned}
\end{equation}
Using Eqs.~(S5)-(S7) and (S10), we have
\begin{equation}
	\begin{aligned}
		\sum_{i_n^{(1)} \times i_n^{(m)} = 0} \sum_{i_n^{(2)},..., i_n^{(m-1)} \in \{0, 1\}} &\prob(O^{(1)}_n =i_{n}^{(1)}|O^{(1,n-1)} = \i^{(1,n-1)}) \\
			&\times \prob(O_n^{(2)}=i_n^{(2)}|O^{(1,n)}=\i^{(1,n)},O^{(2,n-1)}=\i^{(2,n-1)}) \\
			&\times ... \\
			&\times \prob(O_n^{(m)}=i_n^{(m)}|O^{(m-1,n)}=\i^{(m-1,n)},O^{(m,n-1)}=\i^{(m,n-1)}) \\
			&= 2e^{-\lambda_1} - \sum_{i^{(2)},..., i^{(m-1)} \in \{0, 1\}} \prob(O_1^{(1)}=0,O_1^{(2)}=i^{(2)},...,O_1^{(m-1)}=i^{(m-1)}, O_1^{(m)}=0) \\
			&:= C.
	\end{aligned}
\end{equation}
This leads to
\begin{equation*}
	\prob(\tau_s > n) = C^n.
\end{equation*}
Since $\prob(\tau_s > n) < \prob(\tau_s > n - 1) < 1$ for all $n \ge 2$, we have $C < 1$, showing that $\prob(\tau_s > n)$ is  an exponential distribution. 
\end{proof}

Furthermore,
the definition of $\tau_s$ implies that 
\begin{equation*}
	\{\tau_1 > n \} \subset \{\tau_s > n\},\quad \forall n \in \mathbb{N},
\end{equation*}
where $\tau_1$ is the first activation time of node $1$.
Since 
\begin{equation*}
	\prob (\tau_1 > n) = e^{-\lambda_1 n}.
\end{equation*} 
This gives 
\begin{equation*}
	\prob(\tau_s > n) \ge e^{-\lambda_1 n},
\end{equation*}
meaning that the distribution, $\prob(\tau_s > n)$, for any single branch has a uniform lower bounded.
And we notice that the constant $C$ in Eq.~(S27) is monotonically increasing with respect to the size of the ring, $m$, indicating that the exponent for a branch located at a large ring is smaller than that of a branch located at a small ring.
Therefore, we can find an exponential distribution that is a uniform upper bound for every single branch, and the exponent of the upper bound is related to the targeted exponent of nodes and trunks.

\subsection{Activity of branches in bursty activity patterns}
In this section, we prove that the distribution of the first activation time for every single branch has uniform upper and lower bounds, which are all heavy-tailed.

Similar to Poisson-like activity patterns, for a ring with $m$ nodes, we also select all links except link $(1,m)$ as trunks. 
The exponent of every single node (trunk) is the same, denoted as $\alpha_1$ ($\alpha_2$).

First, we offer the mathematical definition of the heavy-tailed distribution \cite{rolski2009stochastic}. 
\begin{definition}[Heavy-tailed distribution]
	The distribution of a random variable $X$ is said to have a heavy tail if 
	\begin{equation*}
		\lim_{x\to \infty} e^{\lambda x} \prob(X > x) = \infty, \quad \forall \lambda > 0.
	\end{equation*}
\end{definition}

By Definition 4, we propose the definition of a sequence to be heavy-tailed.
\begin{definition}[Heavy-tailed sequence]
	A non-negative sequence $\{a_n\}_{n=1}^{\infty}$ is said to be heavy-tailed if
	\begin{equation*}
		\lim_{n \to \infty} e^{\lambda n}a_n = \infty,  \quad \forall \lambda > 0.
	\end{equation*}
\end{definition}

Under the above definitions, we have the following lemmas.
\begin{lemma}
	For a discrete non-negative random variable $X$, let $a_n = \prob(X > n)$, if
	\begin{equation}
		\lim_{n\to \infty} \frac{a_n}{a_{n-1}} = 1,
	\end{equation}
	then $X$ is heavy-tailed.
\end{lemma}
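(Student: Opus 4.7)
The plan is to show that the ratio hypothesis forces the sequence $a_n$ to decay slower than any geometric sequence, which immediately yields the heavy-tailed property.

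First I would note that the hypothesis implicitly requires $a_n > 0$ for all $n$ (otherwise the ratio $a_n/a_{n-1}$ is eventually $0/0$ and cannot tend to $1$). Then for an arbitrary fixed $\lambda > 0$, the idea is to exploit the fact that $a_n/a_{n-1}$ is eventually close to $1$, and in particular eventually larger than $e^{-\lambda/2}$. By the hypothesis, there exists $N = N(\lambda)$ such that
\begin{equation*}
	\frac{a_n}{a_{n-1}} \ge e^{-\lambda/2}, \qquad \forall n \ge N.
\end{equation*}

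Next I would telescope this bound. For every $n > N$,
\begin{equation*}
	a_n \;=\; a_N \prod_{k=N+1}^{n} \frac{a_k}{a_{k-1}} \;\ge\; a_N \, e^{-\lambda (n-N)/2}.
\end{equation*}
Multiplying both sides by $e^{\lambda n}$ yields
\begin{equation*}
	e^{\lambda n} a_n \;\ge\; a_N \, e^{\lambda N / 2} \, e^{\lambda n / 2},
\end{equation*}
and since $a_N > 0$ is a fixed positive constant, the right-hand side tends to $+\infty$ as $n \to \infty$. Because $\lambda > 0$ was arbitrary, Definition 5 gives that $\{a_n\}$ is heavy-tailed, and hence by Definition 4 so is $X$.

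There is no real obstacle here; the only subtle point is the positivity of the tail, which is forced by the hypothesis. The rest is a direct multiplicative estimate, which is essentially the discrete analog of the standard fact that $\ln a_n = o(n)$ whenever consecutive log-ratios tend to $0$.
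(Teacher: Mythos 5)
Your proof is correct and takes essentially the same route as the paper's: the paper notes that $e^{\lambda n}a_n/\bigl(e^{\lambda (n-1)}a_{n-1}\bigr)\to e^{\lambda}>1$ and invokes the ratio test for divergence, while you simply unpack that ratio test by telescoping the bound $a_n/a_{n-1}\ge e^{-\lambda/2}$. Your remark on the positivity of the tail is a sensible detail that the paper leaves implicit.
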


\begin{proof}
	By Eq.~(S28), we have
	\begin{equation*}
		\lim_{n \to \infty} \frac{e^{\lambda n}a_n}{e^{\lambda (n-1)}a_{n-1}} = e^{\lambda} > 1, \quad \forall \lambda > 0.
	\end{equation*}
	Therefore,
	\begin{equation*}
		\lim_{n \to \infty} e^{\lambda n}a_n = \infty,
	\end{equation*}
	meaning that the distribution of $X$ is heavy-tailed.
\end{proof}
One straightforward implication is that the power-law distribution is heavy-tailed.

Another lemma shows that a sequence is heavy-tailed when it is lower-bounded by another heavy-tailed sequence,
\begin{lemma}
	For two non-negative sequences $\{a_n\}_{n=1}^{\infty}$ and $\{b_n\}_{n=1}^{\infty}$ that satisfy $a_0=b_0=1$ and $a_n\ge b_n$ for all $n\ge 1$, if 
\begin{equation*}
	\lim_{n\to \infty} \frac{b_n}{b_{n-1}} = 1,
\end{equation*}
then the sequence $\{a_n\}_{n=1}^{\infty}$ is heavy-tailed.
\end{lemma}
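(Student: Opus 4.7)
The plan is to deduce the heavy-tail property of $\{a_n\}$ from that of $\{b_n\}$ by direct comparison. Since Definition 5 demands $e^{\lambda n} a_n \to \infty$ for every $\lambda > 0$, and the hypothesis $a_n \ge b_n \ge 0$ gives the pointwise inequality $e^{\lambda n} a_n \ge e^{\lambda n} b_n$ for every $n$ and every $\lambda > 0$, it is enough to prove that $\{b_n\}$ itself is heavy-tailed in the sense of Definition 5. In other words, the whole lemma reduces to showing that the ratio condition $b_n/b_{n-1} \to 1$ already forces $\{b_n\}$ to decay sub-exponentially.

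For that reduction I would reuse the ratio argument from the proof of Lemma 1, but phrased directly for sequences rather than tail probabilities. Fix $\lambda > 0$ and set $c_n = e^{\lambda n} b_n$. Then
\begin{equation*}
\frac{c_n}{c_{n-1}} \;=\; e^{\lambda}\,\frac{b_n}{b_{n-1}} \;\longrightarrow\; e^{\lambda} \;>\; 1.
\end{equation*}
Hence for any $\varepsilon \in (0, e^{\lambda}-1)$ there exists $N$ such that $c_n/c_{n-1} > 1+\varepsilon$ for all $n > N$, which yields $c_n > c_N (1+\varepsilon)^{n-N}$ and so $c_n \to \infty$. This gives $\lim_{n\to\infty} e^{\lambda n} b_n = \infty$ for every $\lambda > 0$, i.e. $\{b_n\}$ is heavy-tailed.

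Chaining the two observations,
\begin{equation*}
\lim_{n\to\infty} e^{\lambda n} a_n \;\ge\; \lim_{n\to\infty} e^{\lambda n} b_n \;=\; \infty
\end{equation*}
for every $\lambda > 0$, which is exactly the definition of $\{a_n\}$ being heavy-tailed. I do not expect a real obstacle here; the only subtle point is the implicit requirement that $b_n > 0$ for all sufficiently large $n$ so that the ratio $b_n/b_{n-1}$ is well defined, but this is already built into the hypothesis $b_n/b_{n-1} \to 1 > 0$ (and the seed $b_0 = 1$), so no extra assumption is needed.
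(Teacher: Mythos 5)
Your proof is correct, but it takes a different and more direct route than the paper. The paper argues by contradiction: it supposes $e^{\lambda_0 n}a_n\not\to\infty$ for some $\lambda_0>0$, extracts a subsequence $\{a_{n_k}\}$ along which $e^{\lambda_0 n_k}a_{n_k}$ stays bounded, compares the logarithmic increments of $a_{n_k}$ (which behave like $-\lambda_0$) with those of $b_{n_k}$ (which tend to $0$), and concludes $a_{n_k}<b_{n_k}$ for large $k$, contradicting $a_n\ge b_n$. You instead observe that the hypothesis $b_n/b_{n-1}\to 1$ is exactly the hypothesis of Lemma 1, so the ratio-test computation shows $e^{\lambda n}b_n\to\infty$ for every $\lambda>0$ directly, and then the pointwise comparison $e^{\lambda n}a_n\ge e^{\lambda n}b_n$ immediately transfers the heavy-tail property to $\{a_n\}$. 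Your version is more elementary and arguably more robust: the paper's subsequence step implicitly treats the increments $\ln(a_{n_k}/a_{n_{k-1}})$ as if the subsequence gaps were bounded, whereas your argument needs no subsequence extraction at all and only uses that a sequence whose consecutive ratios eventually exceed $1+\varepsilon$ diverges. Your side remark about positivity of $b_n$ (guaranteed by $b_0=1$ and the convergence of the ratios to $1$) is also the right thing to note. Both proofs ultimately rest on the same core computation, namely $e^{\lambda}\cdot(b_n/b_{n-1})\to e^{\lambda}>1$, so nothing essential is lost by taking your shorter path.
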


\begin{proof}
	To see this, 
we assume that there exists $\lambda_0 > 0$ such that $\lim_{n \to \infty} e^{\lambda_0 n} a_n \neq \infty$.
Then $\liminf_{n \to \infty} e^{\lambda_0 n} a_n = c < \infty$, which shows there exists a subsequence $\{a_{n_k}\}_{k=1}^{\infty}$ such that the decay of $a_{n_k}$ is exponential, and 
\begin{equation*}
	\begin{aligned}
		&\ln a_{n_k} := \sum_{i=1}^{n_k} x_i,\quad x_{n_k} = \ln \frac{a_{n_k}}{a_{{n_{k-1}}}} \to -\lambda_0 \quad \text{as}~n_{k} \to \infty, \\
		&\ln b_{n_k} := \sum_{i=1}^{n_k} y_i,\quad y_{n_k} = \ln \frac{b_{n_k}}{b_{{n_{k-1}}}} \to 0 \quad \text{as}~n_{k} \to \infty.
	\end{aligned}
\end{equation*}
Thus there exists $n_j$ such that for all $n_k \ge n_j$, $a_{n_k} < b_{n_k}$, in contradiction with the assumption that $a_n \ge b_n$.
\end{proof}

Since
\begin{equation*}
	\prob(\tau_s > n) \ge \prob(\tau_1 > n) \approx C n^{-\alpha_1 + 1},
\end{equation*} 
with the help of lemmas above, we obtain that $\prob(\tau_s > n)$ is a heavy-tailed distribution and is lower bounded by a power-law distribution with an exponent $\alpha_1$.

We next turn to derive the uniform upper bounds for branches.
\begin{theorem}
	For a general ring with $m$ nodes, if the exponent of every single node (trunk) is the same, then 
	\begin{equation*}
		A_n \le \prob(\tau_s > n) \le B_n
	\end{equation*}
	holds for any $m$, where the sequences $\{A_n\}_{n=1}^{\infty}$ and $\{B_n\}_{n=1}^{\infty}$ are both heavy-tailed.
\end{theorem}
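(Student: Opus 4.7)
The strategy is to exhibit explicit heavy-tailed sequences $\{A_n\}$ and $\{B_n\}$, both independent of the ring size $m$, that sandwich $\prob(\tau_s > n)$. The lower bound will come from the root node's survival function, and the upper bound from an abstract supremum over $m$ combined with Lemma 2.

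For the lower bound, the key observation is the set inclusion $\{\tau_1 > n\} \subseteq \{\tau_s > n\}$: if the spanning-tree root (node $1$) is inactive throughout $[1,n]$, then the branch $(1,m)$, which requires both endpoints to be simultaneously active, cannot fire in $[1,n]$ either. So take $A_n := \prob(\tau_1 > n)$. Because node $1$'s target is a power law with exponent $\alpha_1 > 1$, its survival function satisfies $A_n \sim C\, n^{-\alpha_1+1}$, whence $A_n/A_{n-1} \to 1$. Lemma 1 then certifies that $\{A_n\}$ is heavy-tailed, and the above inclusion gives $\prob(\tau_s > n) \ge A_n$ uniformly in $m$.

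For the upper bound, set $B_n := \sup_{m \ge 3} \prob_m(\tau_s > n)$, making the ring size explicit. This supremum is trivially bounded above by $1$, hence well-defined, and by construction $\prob_m(\tau_s > n) \le B_n$ for every $m$. Moreover the lower bound already established yields $B_n \ge A_n$ for every $n$. Applying Lemma 2 with $a_n = B_n$ and $b_n = A_n$ (which has $A_n/A_{n-1} \to 1$) then shows $\{B_n\}$ is itself heavy-tailed, completing the sandwich.

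The main obstacle, should one want a sharper $B_n$ than this abstract supremum, is to establish either a monotonicity statement for $\prob_m(\tau_s > n)$ in $m$ (paralleling the monotonicity of the constant in Eq.~(S27) remarked in the Poisson-like case), which would realise $B_n$ as the limit $\lim_{m \to \infty}\prob_m(\tau_s > n)$, or to propagate bounds through the chain of trunks in the analogue of the recursion derived via Eq.~(S25). Either route is technically involved because the two endpoint activities become increasingly decoupled as $m$ grows along the chain. For the theorem as stated, however, the abstract supremum construction together with Lemma 2 already delivers a heavy-tailed upper envelope, and the theorem follows.
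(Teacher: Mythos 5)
Your proof is correct for the theorem as stated, and the lower bound is exactly the paper's: both use the inclusion $\{\tau_1 > n\} \subseteq \{\tau_s > n\}$ so that $A_n = \prob(\tau_1 > n) \approx C n^{-\alpha_1+1}$, with Lemma 1 certifying heavy tails uniformly in $m$. Where you diverge is the upper bound. The paper's proof expands $\prob(\tau_s > n)$ one step at a time, isolates the term $C_n^{(m)}$ built from $1 - p_1(\i^{(1,n-1)},1)$, and uses the symmetry between the two endpoints of the branch to get the two-sided estimate $C_n^{(m)} \le \prob(\tau_s > n) \le 2 C_n^{(m)}$, then sets $A_n = \min_m C_n^{(m)}$ and $B_n = \max_m 2C_n^{(m)}$; this ties the envelope to the actual decay of $\prob(\tau_s > n)$ up to a factor of $2$ for each fixed $m$. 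You instead take $B_n = \sup_{m}\prob_m(\tau_s > n)$ and feed it to Lemma 2 against $A_n$. That is logically valid, but it exposes (as you half-acknowledge) that the upper-bound half of the statement is nearly content-free: once the power-law lower bound is in place, Lemma 2 makes \emph{any} dominating sequence heavy-tailed, including $B_n \equiv 1$. So your route is shorter and proves the literal claim, while the paper's route buys a structurally meaningful envelope whose width is controlled (the factor $2$) and whose dependence is explicitly on the target exponents rather than on an abstract supremum. If you wanted your $B_n$ to carry comparable information you would indeed need the monotonicity-in-$m$ or trunk-propagation argument you sketch at the end, which is essentially what the paper's $C_n^{(m)}$ decomposition accomplishes.
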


\begin{proof}
	Similar to Eq.~(S26), for each fixed $\i^{(1,n)},...,\i^{(m,n)}$, we have 
	\begin{equation*}
		\begin{aligned}
			\prob(\tau_s > n) &= \sum_{\substack{\i^{(1,n-1)} \\ \circ  \i^{(m,n-1)} = \textbf{0}^{(n-1)}}} \sum_{\substack{\i^{(2,n-1)},...,\i^{(m-1,n-1)} \\ \in \{0,1\}^{n}}} \prob(O^{(1,n-1)} =\i^{(1,n-1)}, ..., O^{(m,n-1)}=\i^{(m,n-1)}) \\
			&\left[1 - p_1(\i^{(1,n-1)},1) + \sum_{i^{(2)},..., i^{(m-1)} \in \{0, 1\}} \prob \left(\substack{O_n^{(1)}=1,O_n^{(2)}=i^{(2)}, \\
			...,\\
			O_n^{(m-1)}=i^{(m-1)}, O_n^{(m)}=0| \\
			O^{(1,n-1)} =\i^{(1,n-1)}, ..., O^{(m,n-1)}=\i^{(m,n-1)}} \right) \right]\\
			&= \sum_{\substack{\i^{(1,n-1)} \\ \circ  \i^{(m,n-1)} = \textbf{0}^{(n-1)}}} \sum_{\substack{\i^{(2,n-1)},...,\i^{(m-1,n-1)} \\ \in \{0,1\}^{n}}} \prob(O^{(1,n-1)} =\i^{(1,n-1)}, ..., O^{(m,n-1)}=\i^{(m,n-1)}) \\
			&\left[1 - p_1(\i^{(1,n-1)},1) + 1 - p_m(\i^{(m,n-1)},1) - \sum_{i^{(2)},..., i^{(m-1)} \in \{0, 1\}} \prob \left(\substack{O_n^{(1)}=0,O_n^{(2)}=i^{(2)}, \\
			...,\\
			O_n^{(m-1)}=i^{(m-1)}, O_n^{(m)}=0| \\
			O^{(1,n-1)} =\i^{(1,n-1)}, ..., O^{(m,n-1)}=\i^{(m,n-1)}} \right) \right]\\ 
		\end{aligned}
	\end{equation*}
Let 
\begin{equation*}
	C_n^{(m)} = \sum_{\substack{\i^{(1,n-1)} \\ \circ  \i^{(m,n-1)} = \textbf{0}^{(n-1)}}} \sum_{\substack{\i^{(2,n-1)},...,\i^{(m-1,n-1)} \\ \in \{0,1\}^{n}}} \prob(O^{(1,n-1)} =\i^{(1,n-1)}, ..., O^{(m,n-1)}=\i^{(m,n-1)}) (1 - p_1(\i^{(1,n-1)},1)),
\end{equation*}
Since the statuses of nodes $1$ and $m$ are symmetric.
 We have
\begin{equation*}
	C_n^{(m)} \le \prob(\tau_s > n) \le 2C_n^{(m)},
\end{equation*}
Since $\prob(\tau_s > n)$ is heavy-tailed, using Lemma 2, we have $2C_n^{(m)}$ is heavy-tailed when $n$ is sufficiently large.
Then $\prob(\tau_s > n)$ is upper and lower bounded by heavy-tailed distributions.
Let $A_n = \min_{m}C_{n}^{(m)}$ and $B_n = \max_{m}2C_{n}^{(m)}$, $A_n$ and $B_n$ are also heavy-tailed and are only related to the targeted exponent of nodes and trunks.
\end{proof}

Finally, we discuss the IET distribution of all links. 
The distribution is represented by a random variable $E$, 
\begin{equation}
	\prob(E \le x) = \frac{1}{|\E|}\sum_{i \in \E} \prob(E_i \le x),
\end{equation}
where $E_i$ is a random variable representing the IET distribution of link $i$, $|\E|$ is the number of links.
By Eq.~(S29), we have
\begin{equation*}
	\min_{i \in \E} \prob(E_i \le x) \le  \prob(E \le x)\le \min_{i \in \E} \prob(E_i \le x),
\end{equation*}
which means the distribution of all links is upper and lower bounded by the distribution of individual links.
When the algorithmic IET distribution of every single link is heavy-tailed (exponential), the distribution of all links is also heavy-tailed (exponential).

\section{Connection to intercommunication time}
A node $x$ is said to be communicating at time $t$ when $x$ is active and at least one of its neighbors is active.
In this case, we can count the time interval between two communication events, that is, the intercommunication time (ICT).
We use the distribution of the first communication time to analyze the statistical property of ICTs.

We assume a node $x$ has $k$ links (i.e. $k$ neighbors), and the stochastic process of $x$ and these neighbors is denoted as $\{X_n \}_{n\ge 0}$ and $\{N_n^{(1)} \}_{n\ge 0}$,$...$, $\{N_n^{(k)}\}_{n\ge 0}$, respectively.
The stopping time for $x$ is 
\begin{equation*}
	\tau_{com}^{(x)}=\min\{\tau_1,...,\tau_k\},
\end{equation*}
where $\tau_j$ ($j=1,...,k$) is the first activation time of link $j$.
We would prove the following conclusions:
In bursty (Poisson-like) activity patterns, (a) the ICT distribution of node $x$ is upper and lower bounded by power-law (exponential) distributions, and (b) the ICT distribution converges exponentially to the IET distribution as $k\to \infty$.

We first prove part (a).
By the definition of the stopping times, we have 
\begin{equation*}
	\{\tau_1 \le n \} \subset \{\tau_{com}^{(x)} \le n \} \subset \{\tau_x \le n \},
\end{equation*}
and part (a) is obtained.

For part (b), without loss of generality, we assume all links have the same IET distribution.
We have
\begin{equation*}
\begin{aligned}
	\prob(\tau_{com}^{(x)} > n) &= \prob(\min\{\tau_1,...,\tau_k\} > n) = \prob(\tau_1 > n, ..., \tau_k > n) \\
								&=\sum_{w^{(x,n)} \circ (w^{(1,n)}+...+w^{(k,n)})=\textbf{0}^{(n)}} \prob(X^{(n)}=w^{(x,n)},N^{(1,n)}=w^{(1,n)},...,N^{(k,n)}=w^{(k,n)}) \\
								&= \sum_{w^{(x,n)} = \textbf{0}^{(n)}} + \sum_{\substack{w^{(x,n)} \neq \textbf{0}^{(n)}, \\ w_x^{(n)} \circ (w^{(1,n)}+...+w^{(k,n)})=\textbf{0}^{(n)}}}.
\end{aligned}
\end{equation*}
The first part of the above equation equals $\prob(\tau_x > n)$.
We turn to prove that the second part converges exponentially to $0$ as $k\to \infty$.

For any trajectory $w^{(x,n)} \neq \textbf{0}^{(n)}$, node $x$ is activated at least once during time $1$ to $n$, we assume the last activation time is $m \le n$.
For any tuple $(w^{(x,n)},w^{(1,n)},...,w^{(k,n)})$ satisfies the condition of the second part, we have 
\begin{equation*}
\begin{aligned}
	\prob(X^{(n)}=w^{(x,n)},...,N^{(k,n)}=w^{(k,n)}) &\le \prob(X^{(m)}=w^{(x,m)},...,N^{(k,m)}=w^{(k,m)}) \\
	&= \prob(X^{(m-1)}=w^{(x,m-1)},...,N^{(k,m-1)}=w^{(k,m-1)}) \\
	&\times \prob \left(\substack{X_m=1,N_m^{(1)}=0,...,N_m^{(k)}=0| \\ X^{(m-1)}=w^{(x,m-1)},...,N^{(k,m-1)}=w^{(k,m-1)}} \right) \\
	&= \prob(X^{(m-1)}=w^{(x,m-1)},...,N^{(k,m-1)}=w^{(k,m-1)}) \\
	& \times \prob(X_m=1|X^{(m-1)}=w^{(x,m-1)}) \\
	& \times \prob(N_m^{(1)}=0|X^{(m)}=w^{(x,m)}, N^{(1,m-1)}=w^{(1,m-1)}) \\
	& ... \\
	& \times \prob(N_m^{(k)}=0|X^{(m)}=w^{(x,m)}, N^{(k,m-1)}=w^{(k,m-1)}). \\
	& \le p_x(w^{(x,m-1)},1) \left(\frac{p_x(w^{(x,m-1)},1) - p_z(\textbf{0}^{(m-1)},1)}{p_x(w^{(x,m-1)},1)}\right)^k
\end{aligned}
,
\end{equation*}
where $p_z(\textbf{0}^{(m-1)},1)$ represents the conditional probability of links.
As the number of the tuple $(w^{(x,n)},w^{(1,n)},...,w^{(k,n)})$ satisfying the condition of the second part is finite and 
$$\frac{p_x(w^{(x,m-1)},1) - p_z(\textbf{0}^{(m-1)},1)}{p_x(w^{(x,m-1)},1)} < 1,$$ 
the second part decay exponentially to $0$.

When the average degree of underlying topologies is sufficiently large, the ICT distribution of nodes is almost the same as the IET distribution.

\section{Statistical test in empirical datasets}

In our theoretical model, we assume that the activity of every single node and trunk follows a renewal process and the activity of nodes is somehow conditionally independent (Eq.~(S4)).
Here we use statistical inference to verify whether the above two assumptions hold in empirical datasets.

For the first assumption, we need to demonstrate that the IET samples of single nodes/links come from the same distribution and are independent of each other.
We begin with counting the IET samples of each node/link. 
For each node/link with $n$ samples, we randomly divide them into two sets of size $n/2$.  
We use the Kolmogorov–Smirnov test \cite{kolmogorov1933sulla} to verify whether these two sets are from the same distribution, which is coincided with the null hypotheses.
And for the conditional independence, we use the Spearman rank correlation test \cite{myers2013research}, in which the null hypothesis is that the two sets are uncorrelated.

Figure S6 shows that the null hypotheses are accepted in both the Kolmogorov–Smirnov test and the Spearman rank correlation test, meaning that the activity of nodes and links is a renewal process in empirical datasets.
Furthermore, the distributions of the $p$-value for empirical datasets match with that for the corresponding synthetic temporal networks.
Another important finding is that although we have proved that the activity of branches is not a strict renewal process, most of the branches still pass the tests.
This indicates that the approximation we have done in Section 3 is reasonable.  

For the second assumption, we cannot directly calculate whether the left-hand side of Eq.~(S4) equals the right-hand side through statistics since we only have one trajectory for each node and link in empirical datasets.
We offer an alternative solution to test the conditional independence in Eq.~(S4).
Specifically, for each pair of nodes in the underlying topology, we count the conditional IET samples of one node $x$ when the other node $y$ is active or inactive, which forms two sets of samples.
Then we use the Kolmogorov–Smirnov test over these two sets. 
If the null hypothesis is valid, Eq.~(S4) holds in empirical datasets.
This is because the relation 
	\begin{equation}
	\begin{aligned}
		&\prob(X_{m+k}=1,...,X_{m+1}=0|X_m=1,X^{(n-1)}=x^{(n-1)}, Y^{(n)}=y^{(n)}) \\
	&= \prob(X_{k}=1,...,X_{1}=0)
	\end{aligned}
	\end{equation}
holds for any $m,k$ under Equation (S4).
It is straightforward to check the left-hand side of Eq.~(30) equals
\begin{equation*}
	\frac{\prob(X_{m+k}=1,...,X_{m+1}=0,X_m=1,X^{(n-1)}=x^{(n-1)}, Y^{(n)}=y^{(n)})}{\prob(X_m=1,X^{(n-1)}=x^{(n-1)}, Y^{(n)}=y^{(n)})},
\end{equation*}
and the numerator equals
\begin{equation*}
	\begin{aligned}
		&\sum_{y_i \in \{0,1\},\ i=1,...,k-1} \prob(X_{m+k}=1, Y_{m+k-1}=y_{m+k-1},...,X_{m+1}=0,Y_{m+1}=y_1, \\
		&X_m=1,X^{(n-1)}=x^{(n-1)}, Y^{(n)}=y^{(n)}) \\
		&= \prob(X_{k}=1|X_{k-1}=0,...,X_1=0) \prob(X_{m+k-1}=0,...,X_m=1,X^{(n-1)}=x^{(n-1)}, Y^{(n)}=y^{(n)}) \\
		&= \prob(X_{k}=1,...,X_{1}=0)\prob(X_m=1,X^{(n-1)}=x^{(n-1)}, Y^{(n)}=y^{(n)}).
	\end{aligned}
\end{equation*}
Therefore, the left-hand side equals the right-hand side of Eq.~(S30).
Figure S7 shows the corresponding statistical results.
The distribution of $p$-values indicates that Eq.~(S4) does not hold in empirical datasets.

Nevertheless, we still find weaker conditional independence in empirical datasets.
Specifically, we use the Chi-squared test \cite{pearson1900x} to validate if the following relation holds
\begin{equation}
\begin{aligned}
	&\prob(X_{n+1}=1|X_n=0,...,X_{n-l+2}=0, X_{n-l+1}=1, Y_n=0) \\
	&= \prob(X_{n+1}=1|X_n=0,...,X_{n-l+2}=0, X_{n-l+1}=1, Y_n=1) 
\end{aligned}
\end{equation}
The parameter $l$ represents the IET from the last activation time.
If Eq.~(S4) holds, the null hypotheses for the test should be valid for any possible $l$.
We find that the null hypotheses become more easily accepted when $l$ is larger (fig.~S8).
This means that the state of $y$ matters only when $x$ tries to activate frequently.

\section{Time-varying underlying topologies}

Another pivotal advantage of our algorithm is easily integrated with the evolution of networks.
Algorithm 4 demonstrates a unified framework to construct temporal networks with evolving underlying topologies.

The most common network evolution model is the network growth model, in which new nodes with links sequentially enter a network system and connect to old nodes \cite{barabasi1999} (including the example presented in the main text).
In addition, when considering recessionary effects, the number of nodes and links may decrease \cite{saavedra2008asymmetric}.
For different network evolution, the main design in Algorithm 4 is how to update $\mathcal{T}$.
A reasonable design is necessary when nodes and links may vanish on the underlying topology during the evolution.

\begin{algorithm}[t]
\caption{Combination with network evolution} 
\hspace*{0.02in} {\bf Input:} 
initial underlying topology $\G$ and parameter $t_{tol}$ \\
 \hspace*{0.02in} {\bf Output:} 
trajectories of all nodes and links
\begin{algorithmic}[1]
\State Select a spanning tree $\mathcal{T}$ of $\G$
\State Assign a probability mass function to each node and trunk in $\mathcal{T}$
\For{$t=1$ to $t_{tol}$} 
	\State Execute a single loop of Algorithm 3 on $\G$
	\If{$\G$ evolves}
		\State Update $\G$ by network evolution
		\State Update $\mathcal{T}$ on the new underlying topology
		\State Assign a targeted distribution to each new node/trunk and set them to be active
		\State Update the state of old nodes and links
	\EndIf
\EndFor
\Return  trajectories of all nodes and links
\end{algorithmic}
\end{algorithm}

\section{Statistics for measuring burstiness and temporal correlations}
The burstiness parameter $B$ is widely used to measure the level of burstiness \cite{goh2008}, defined by the coefficient of variation,
\begin{equation*}
	B = \frac{\sigma/\mu-1}{\sigma/\mu+1},
\end{equation*}
where $\mu$ and $\sigma$ are the mean and standard deviation of IET distributions.
When $\mu$ and $\sigma$ are finite, the definition is meaningful and $|B| < 1$. 

We calculate the burstiness parameter of nodes and links for the synthetic temporal networks in Fig.~2. 
In order to compare algorithmic and theoretical burstiness parameters, we set a cutoff $\kappa$ for theoretical calculation. 
The theoretical mean $\mu$ and standard deviation $\sigma$ for a power-law distribution are
\begin{equation*}
	\mu = \frac{\sum_{i=1}^{\kappa} i^{-\gamma+1}}{\sum_{i=1}^{\kappa} i^{-\gamma}}, \quad \sigma = \left(\frac{\sum_{i=1}^{\kappa} i^{-\gamma+2}}{\sum_{i=1}^{\kappa} i^{-\gamma}} - \mu^2 \right)^{\frac{1}{2}},
\end{equation*} 
and for a discrete exponential distribution are 
\begin{equation*}
	\mu = e^{\gamma/2}\sum_{i=1}^{\kappa} i(e^{-\gamma(i-0.5)}-e^{-\gamma(i+0.5)}), \quad \sigma = \left(e^{\gamma/2}\sum_{i=1}^{\kappa} i^2(e^{-\gamma(i-0.5)}-e^{-\gamma(i+0.5)}) - \mu^2 \right)^{\frac{1}{2}}.
\end{equation*}
Tables S1 and S2 compare the simulation and theoretical $B$ in bursty activity patterns and in Poisson-like activity patterns, respectively.
The simulations are robust to underlying topologies and are well-predicted by theoretical results.
Both nodes and links show a high level of burstiness in bursty activity patterns and present a negative (low) level in Poisson-like activity patterns.

We also investigate the temporal correlation of synthetic temporal networks.
The autocorrelation function $A(\Delta t)$ is a common statistic to appreciate the global activity correlations for temporal networks.
For a temporal network $\G=\{G_1,...,G_T\}$,
the autocorrelation coefficient $A(\Delta t)$ is defined as
\begin{equation*}
	A(\Delta t) = \frac{\frac{1}{T-\Delta t}\sum_{i=1}^{T-\Delta t}E(i)E(i+\Delta t)-\mu_1 \mu_2}{\sigma_1 \sigma_2},
\end{equation*}
where $E(i)$ denotes the total activation numbers of nodes or links in the snapshot $i$, $\mu_1, \sigma_1^2$ (respectively $\mu_2, \sigma_2^2$) are the sample mean and sample variance of the series $\{E(i)\}_{i=1}^{T-\Delta t}$ (respectively $\{E(i+\Delta t)\}_{i=1}^{T-\Delta t}$).
The parameter $\Delta t$ represents the distance of the time windows in the two series.
In particular, when $\Delta t=1$, $A(\Delta t)$ is called the memory coefficient \cite{goh2008}.

By Hölder's inequality, we obtain that $|A(\Delta t)|\le 1$. 
The closer $A(\Delta t)$ is to $0$, the less correlated $\{E(i)\}_{i=1}^{T-\Delta t}$ and $\{E(i+\Delta t)\}_{i=1}^{T-\Delta t}$, and the weaker the autocorrelation of the temporal network.
When $A(\Delta t)$ is close to 1 or -1, A strong positive or negative linear correlation exists between series $\{E(i)-\mu_1 \}_{i=1}^{T-\Delta t}$ and series $\{E(i+\Delta t)-\mu_2 \}_{i=1}^{T-\Delta t}$.

Figure S10 shows the results of $A(\Delta t)$ in bursty activity patterns and Poisson-like activity patterns.
In bursty activity patterns, the construction process is a non-Markovian process, and the results of $A(\Delta t)$  show positive temporal correlations for all time intervals $\Delta t$, which is consistent with heterogeneous temporal behaviour discovered in empirical temporal networks.
In Poisson-like activity patterns, the results of $A(\Delta t)$ are almost $0$.
Actually, Eq.~(S10) indicates that the activity of single nodes/trunks is a discrete-time Markov chain $\{M_n \}_{n\ge 0}$ with two states $\{s_0, s_1\}$, where $s_0$ represents the element (node or trunk) is inactive and $s_1$ represents the element is active.
We set $s_0=0$ and $s_1=1$. 
The corresponding transition probability matrix is given by follows
\begin{spacing}{1.5}
	\centerline{$\bordermatrix{%
& s_0 & s_1 \cr
s_0 & e^{-\lambda} & 1-e^{-\lambda} \cr
s_1 & e^{-\lambda} & 1-e^{-\lambda} \cr
}$,}
\end{spacing} 
\noindent
where $\lambda$ is the exponent of the targeted distribution.
For all $m > 1$, we have
\begin{equation*}
	\begin{aligned}
		\prob(M_{m}=1) &= \sum_{i_1,...,i_{m-1} \in \{0,1\}} \prob(M_1=i_1,...,M_{m-1}=i_{m-1},M_{m}=1) \\
					   &= \sum_{i_1,...,i_{m-1} \in \{0,1\}} \prob(M_1=i_1,...,M_{m-1}=i_{m-1})\prob(M_m=1|M_{m-1}=i_{m-1}) \\
					   &= 1-e^{-\lambda}.
	\end{aligned}
\end{equation*}
Hence, for all $n,m \ge 1$, we have
\begin{equation*}
\begin{aligned}
	\prob(M_n=1, M_{n+m}=1) &= \sum_{i_1,...,i_{m-1} \in \{0,1\}} \prob(M_n=1, M_{n+1}=i_1,...,M_{n+m-1}=i_{m-1},M_{n+m}=1) \\ 
							&= \sum_{i_1,...,i_{m-1} \in \{0,1\}} \prob(M_n=1,...,M_{n+m-1}=i_{m-1})\prob(M_{n+m}=1|M_{n+m-1}=i_{m-1}) \\
							&= (1-e^{-\lambda})^2. \\
\end{aligned}
\end{equation*}
This gives
\begin{equation*}
\begin{aligned}
	A(\Delta t) &\sim \frac{1}{T-\Delta t}\sum_{i=1}^{T-\Delta t}\mathbb{E} M_i M_{i+\Delta t} - \frac{1}{(T-\Delta t)^2}\sum_{i,j=1}^{T-\Delta t} \mathbb{E}M_i \mathbb{E}M_{i+\Delta t} \\
	&= \frac{1}{T-\Delta t}\sum_{i=1}^{T-\Delta t} \prob(M_i=1, M_{i+\Delta t}=1) - \frac{1}{(T-\Delta t)^2}\sum_{i,j=1}^{T-\Delta t} \prob(M_i=1) \prob(M_{i+\Delta t}=1)\\
	&= 0,
\end{aligned}
\end{equation*}
meaning that Poisson-like activity patterns are memoryless (or homogenous).

\section{Analysis of aggregated networks}
In addition to studying the IET distribution, we analyze the structural measures of aggregated networks (fig.~S11) with different aggregation times $t_{agg}$ to evaluate other temporal properties of our synthetic temporal networks. 
We study two typical measures, the expected number of activations of individual nodes (links) and the node strength distribution of aggregated networks. 
The former is a micro statistic capturing the frequency of activations of each unit in networks, and the latter is a macro statistic showing the structural information of entire networks.

For a node $i$ (respectively trunk $j$) whose targeted IET distribution is $\phi_i(\Delta t, \alpha_i)$ (respectively $\psi_j(\Delta t, \beta_j)$), the expectation of $\phi_i$ (respectively $\psi_j$) is denoted as $\mu(\alpha_i)$ (respectively $v(\beta_j)$), where the exponent $\alpha_i$ (respectively $\beta_j$) is obtained by sampling from a distribution $\eta_{node}$ (respectively $\eta_{link}$).
The total activation number of node $i$ up to moment $t$ is denoted as $A_t^{(i)}$.
Let
\begin{equation*}
	A^{(i)}(t) = \mathbb{E}A_t^{(i)}.
\end{equation*}
Using the elementary renewal theorem \cite{ross1996}, we have 
\begin{equation}
	\frac{A^{(i)}(t)}{t}\to \frac{1}{\mu(\alpha_i)} \quad \text{as}~t \to \infty,
\end{equation}
where $\frac{1}{\infty}=0$. 
The conclusion for single links is similar.
Equation (S32) presents an intuitive conclusion that the average growth rate of activation numbers asymptotically equals the frequency of activations.
In a bursty activity pattern, when $\alpha < 2$, the expectation $\mu(\alpha) = \infty$.
This suggests that the growth is sublinear, and the rate asymptotically equals 0.
In a Poisson-like activity pattern, for any exponent $\alpha$, $\mu(\alpha) < \infty$, thus the growth is linear.

Another important statistic for an aggregated network is the node strength distribution. 
In empirical datasets, we find that the node strength distributions present specific robustness across different time scales (fig.~S12).
Here we prove that our model also reproduces this property.

For a static unweight network $\G$, the degree distribution of $\G$ is denoted as $d(x)$ and its maximum value is $k_{max}$.
We set a random variable $X$ of which the probability mass function is $d(x)$.
Let $N_t$ denote the strength of a node in the aggregated network generated by $\G$ with the aggregation time, $t_{agg}=t$, $N_t$ is a random variable. 
We assume that the exponent of all links is sampled from $\eta_{link}$, using a limit theorem of renewal theory \cite{ross1996}, with probability $1$,
\begin{equation}
	\frac{N_t}{t} \to \sum_{i=1}^{X} \frac{1}{v(\eta_i)} \quad \text{as} \ t\to \infty,
\end{equation}
where $\{\eta_i\}_{1\le i \le k_{max}}$ is a sequence of independent random variables with a common distribution $\eta_{link}$.

As $X$ and $\{\eta_i\}_{1\le i \le k_{max}}$ are independent, when $t$ is sufficiently large, from Eq.~(S33), the distribution of $N_t$ is given by
\begin{equation}
	\prob(N_t\le s)=\sum_{x=1}^{k_{max}}\prob(X=x)F^{(x)}(\frac{s}{t}),
\end{equation}
where $F^{(x)}$ is the x-order convolution of the distribution function of the random variable $1/v(\eta_1)$.
The Laplace transform of the random variable $X$ is defined by $\rho_{X}(s) = \mathbb{E} e^{-sX}$,
then Eq.~(S34) is converted into 
\begin{equation*}
	\rho_{N_t}(s) = \sum_{x=1}^{k_{max}} \prob(X=x) [\rho_{\frac{1}{v(\eta_1)}}(st)]^x.
\end{equation*}
In particular, when a.s. $\eta_i$ is a constant and equals $\beta$, Eq.~(S34) can be estimated as follows
\begin{equation}
	\prob(N_t \le s) \approx \int_{0}^{s v(\beta)/t}d(x) \text{d}x.
\end{equation}  
Formally, let $p_{N_t}(x)$ denote the probability density function of $N_t$. 
From Eq.~(S35), we have 
\begin{equation}
	p_{N_t}(x) = \frac{v(\beta) d(\frac{x v(\beta)}{t})}{t}.
\end{equation} 

Equation (S36) shows the relationship between the degree distribution of a static network and the node strength distribution of the aggregated network.
When $d(x)=C x^{-\gamma}$, the node strength distribution is also power-law with the same exponent $\gamma$ for each aggregation time, indicating the robustness of aggregated networks.
Figure S13A shows the survival function of node strength with different $t_{agg}$ based on scale-free underlying topologies. 
As one can see, the results are all power-law distributions with the same exponent as the degree distribution, which suggests that the distribution of node strength is robust to the aggregation time and the given distribution.

For general degree distributions, we can also obtain a similar robust behaviour of node strength distributions.
For two aggregated networks $G_1, G_2$ with the aggregation time $t_1, t_2$ ($t_1 > t_2$), we have 
\begin{equation*}
	\prob(X t_2 / v(\beta) > s) = F^{(2)}(s) = F^{(1)}(s\frac{t_1}{t_2}):=F^{(1)}(\hat{s}),
\end{equation*}
where $F^{(i)}(s)$ is the survivor function of the random variable $X t_i / v(\beta)$ ($i=1,2$).
When $s$ in $F^{(2)}$ changes from $s_1$ to $s_1+1$, $\hat{s}$ in $F^{(1)}$ changes from $s_1\frac{t_1}{t_2}$ to $(s_1+1)\frac{t_1}{t_2}$, which means the proportion of nodes with strength between $s_1\frac{t_1}{t_2}$ and $(s_1+1)\frac{t_1}{t_2}$ in $G_1$ is same as the proportion of nodes with strength $s_1$ in $G_2$.

The normalization is executed as follows.
We first select a sufficiently large moment, $t_{base}$, and its corresponding aggregated network, $G_{base}$, is said to be  the baseline.
For any aggregation time $t_{agg} \ge t_{base}$, when all links have the same targeted distribution, the proportion of nodes with strength between $st_{agg}/t_{base}$ and $(s+1)t_{agg}/t_{base}$ in $G_{agg}$ is same as that with strength $s$ in $G_{base}$.
Therefore, the normalized distribution of node strength for the aggregated network $G_{t_{agg}}$ is the same as that for $G_{t_{base}}$. 
We verify the above robustness on the small-world underlying topology under different activity patterns (fig.~S13B).
The normalized distribution for different aggregation times all collapses onto the node strength of the baseline network.

\clearpage
\begin{table}[h]
\centering
\caption{Burstiness parameter of bursty activity patterns. The cutoff $\kappa$ is set to be $1\times 10^4$. The exponent of the simulation is the fitted exponent of algorithmic distributions, and the exponent of the theory is set to the average of the two simulation exponents.}
\begin{tabular}{lcccccc}
Dataset & Object & Exponent & $B$ & Object & Exponent & $B$ \\
\midrule
Simulation (BA) & Nodes & 1.80 & 0.82 & Links & 1.28 & 0.64\\
Simulation (SW) & Nodes & 1.80 & 0.82 & Links & 1.28 & 0.64\\
Theory & Nodes & 1.80 & 0.84 & Links & 1.28 & 0.60\\
Simulation (BA) & Nodes & 2.00 & 0.83 & Links & 1.69 & 0.80\\
Simulation (SW) & Nodes & 2.00 & 0.83 & Links & 1.71 & 0.79\\
Theory & Nodes & 2.00 & 0.86 & Links & 1.70 & 0.81\\
\bottomrule
\end{tabular}
\end{table}

\clearpage
\begin{table}[h]
\centering
\caption{Burstiness parameter of Poisson-like activity patterns. The cutoff $\kappa$ is set to be $10^3$. The implication of parameters is the same as Table 1.}
\begin{tabular}{lcccccc}
Dataset & Object & Exponent & $B$ & Object & Exponent & $B$ \\
\midrule
Simulation (BA) & Nodes & 1.80 & -0.42 & Links & 1.21 & -0.29\\
Simulation (SW) & Nodes & 1.80 & -0.42 & Links & 1.22 & -0.29\\
Theory & Nodes & 1.80 & -0.42 & Links & 1.21 & -0.29\\
Simulation (BA) & Nodes & 2.50 & -0.55 & Links & 1.86 & -0.43\\
Simulation (SW) & Nodes & 2.50 & -0.55 & Links & 1.89 & -0.44\\
Theory & Nodes & 2.50 & -0.55 & Links & 1.87 & -0.43\\
\bottomrule
\end{tabular}
\end{table}

\clearpage
\begin{figure}
\includegraphics[width=\textwidth]{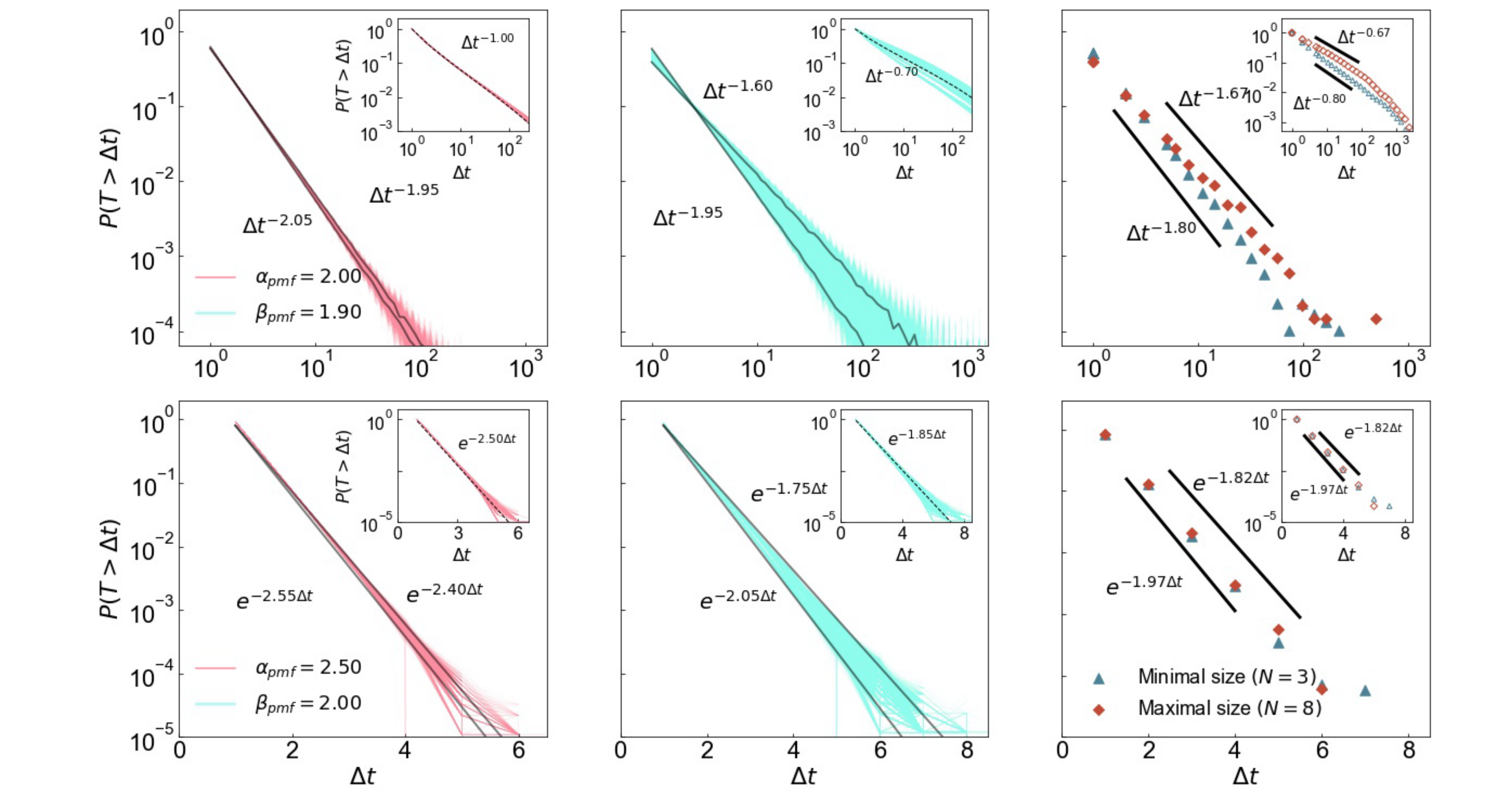}
\caption{\textbf{Single IET distributions of every element.} 
The algorithmic IET distribution of every single node (link) is represented by a red (green) line, respectively, showing in the first and second columns.
The black lines are the upper or lower bounds of the algorithmic results, which are power-law distributions in the first row and exponential distributions in the second row generated by simulations.
We also select the results of branches located in the largest and smallest ring showing in the third column by diamonds and triangles.
The branch located in the largest ring has a smaller fitted exponent than that in the smallest ring, which is consistent with the theoretical result.
}
\end{figure}

\clearpage
\begin{figure}[p]
\includegraphics[width=\textwidth]{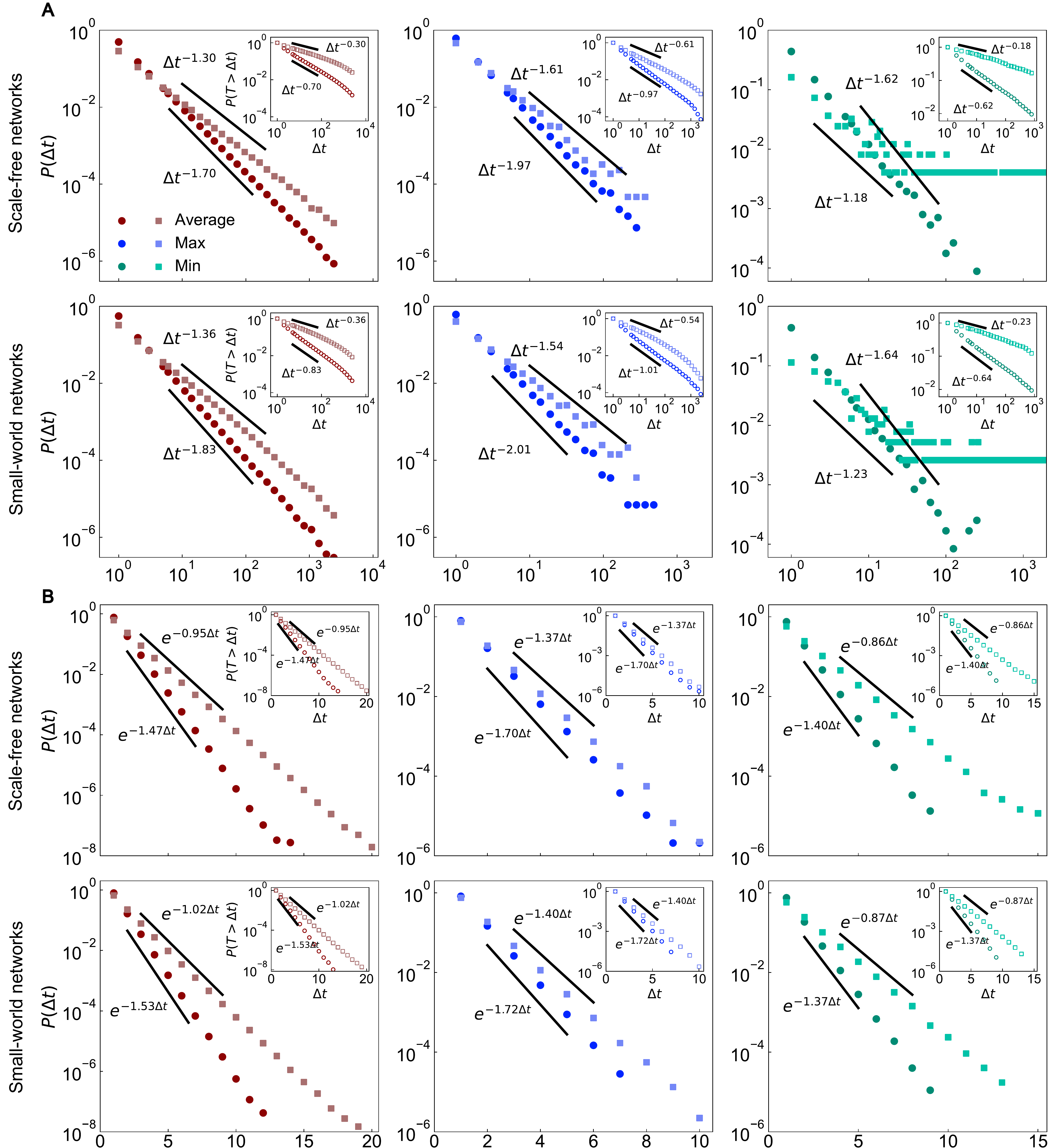}
\caption{\textbf{IET distributions when exponents are sampled from distributions.} 
We consider the case that the exponent of every single node (trunk) is a random variable $\eta_{node}$ ($\eta_{link}$). 
We set $\eta_{node}$ and $\eta_{link}$ to a uniform $[1.60, 2.00]$ random variable and a uniform $[1.20, 1.50]$ random variable in the bursty activity pattern (\textbf{A}) and a uniform $[1.40, 1.70]$ random variable and a uniform $[1.05, 1.30]$ random variable in the Poisson-like activity pattern (\textbf{B}).
The results of nodes (links) are represented by circles (squares).
The aggregated distribution of nodes/links is plotted by brown markers, and the single distribution of the node/link with the maximal (minimal) activation numbers is plotted by blue (green) markers.
Parameter values are the same as those in Fig.~2.}
\end{figure}

\clearpage
\begin{figure}[t]
%\centering
\includegraphics[width=1\textwidth]{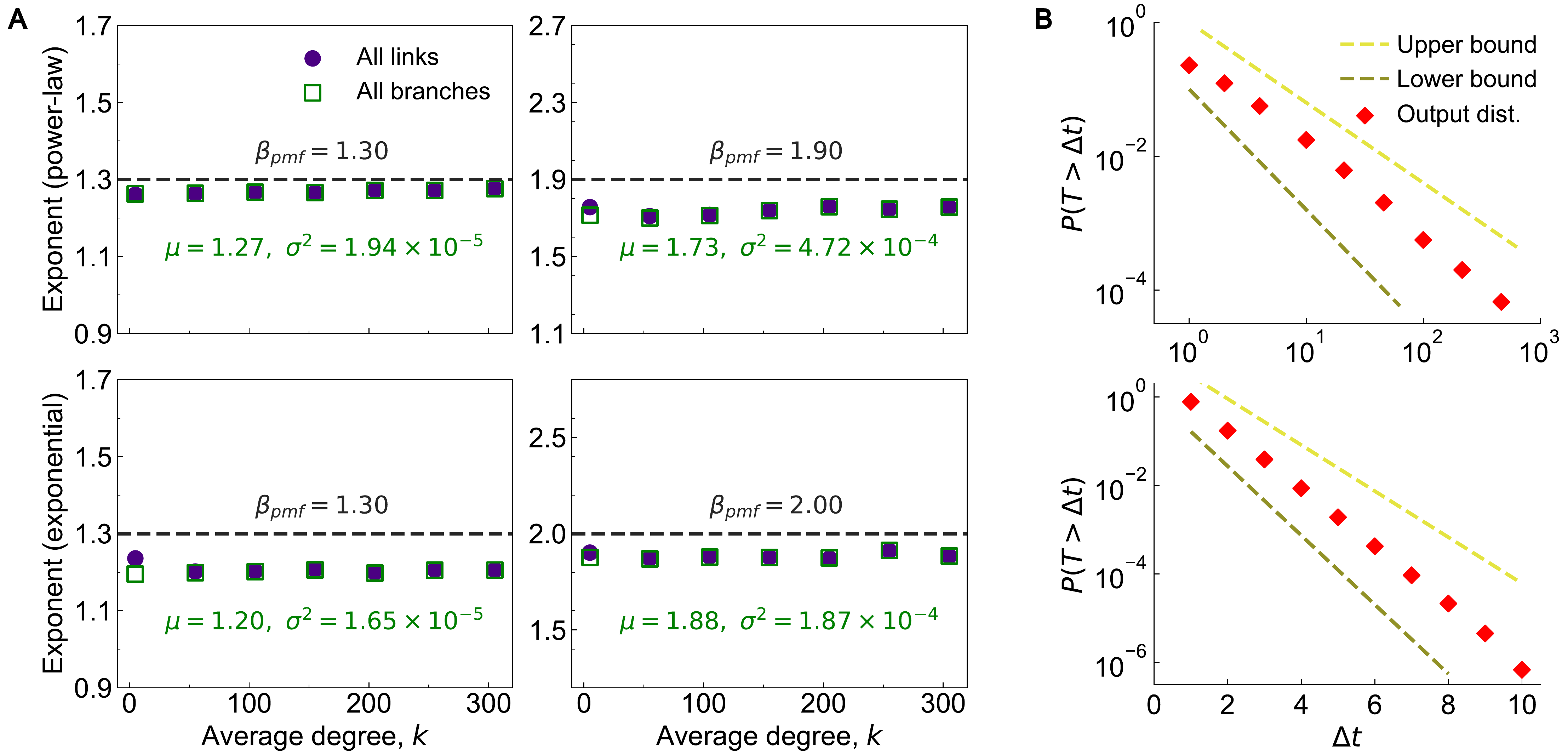}
\centering
\caption{\textbf{Aggregated IET distribution of links is robust to underlying topologies.}
(\textbf{A}) Relationship between the underlying topology and the algorithmic exponent.
Random regular graphs with different average degrees, $k$, are utilized to represent different underlying topologies.
We plot the algorithmic exponent for all branches (hollow squares) and all links (i.e. all trunks and branches, solid circles).  
Black dashed lines show the targeted exponent for trunks.
When the average degree becomes larger, the proportion of branches in all links increases. As a result, the algorithmic exponent for all links gradually converges to that for all branches. 
Furthermore, the algorithmic exponents do not correlate significantly with $k$. The largest magnitude of variances is no more than $10^{-3}$, indicating the robustness to the underlying topology.   
Parameter value: the size of underlying topologies $N=400$.
(\textbf{B}) Schematic illustration of the upper and lower bounds for individual branches. The algorithmic distribution of every single branch (red diamonds) has uniform upper and lower bounds, which are heavy-tailed distributions in a bursty activity pattern (top panel) and exponential distributions in a Poisson-like activity pattern (bottom panel).
}
\end{figure}

\clearpage
\begin{figure}
\includegraphics[width=1\textwidth]{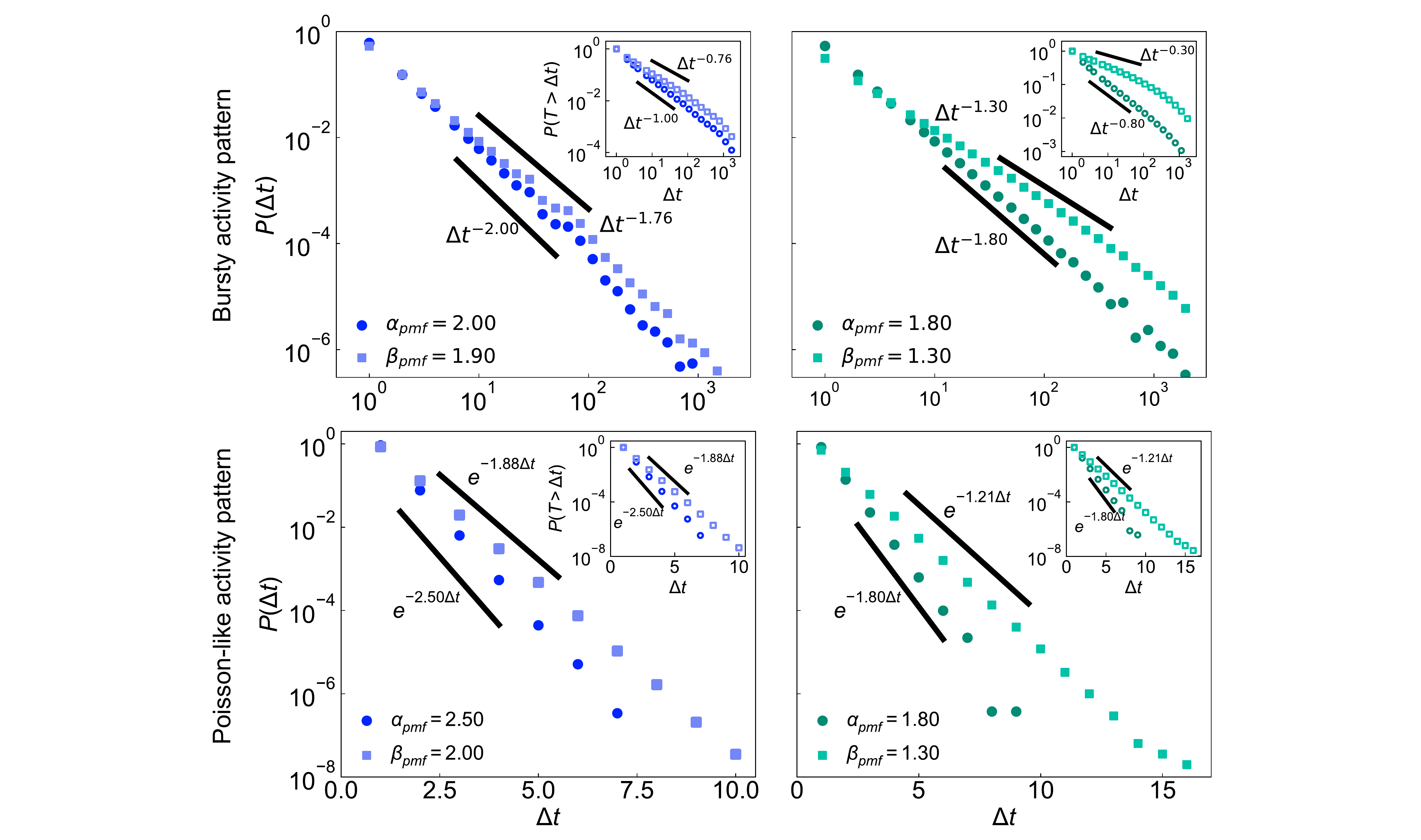}
\centering
\caption{\textbf{IET distributions on well-mixed networks.} 
The largest relative deviation of the algorithmic exponent between Fig.~2 and this is $2\times |1.76-1.69|/(1.76+1.69) \approx 4 \%$.
Parameter value: the size of underlying topologies $N=400$.
Other parameter values are the same as in Fig.~2.}
\end{figure}

\clearpage
\begin{figure}[p]
\includegraphics[width=\textwidth]{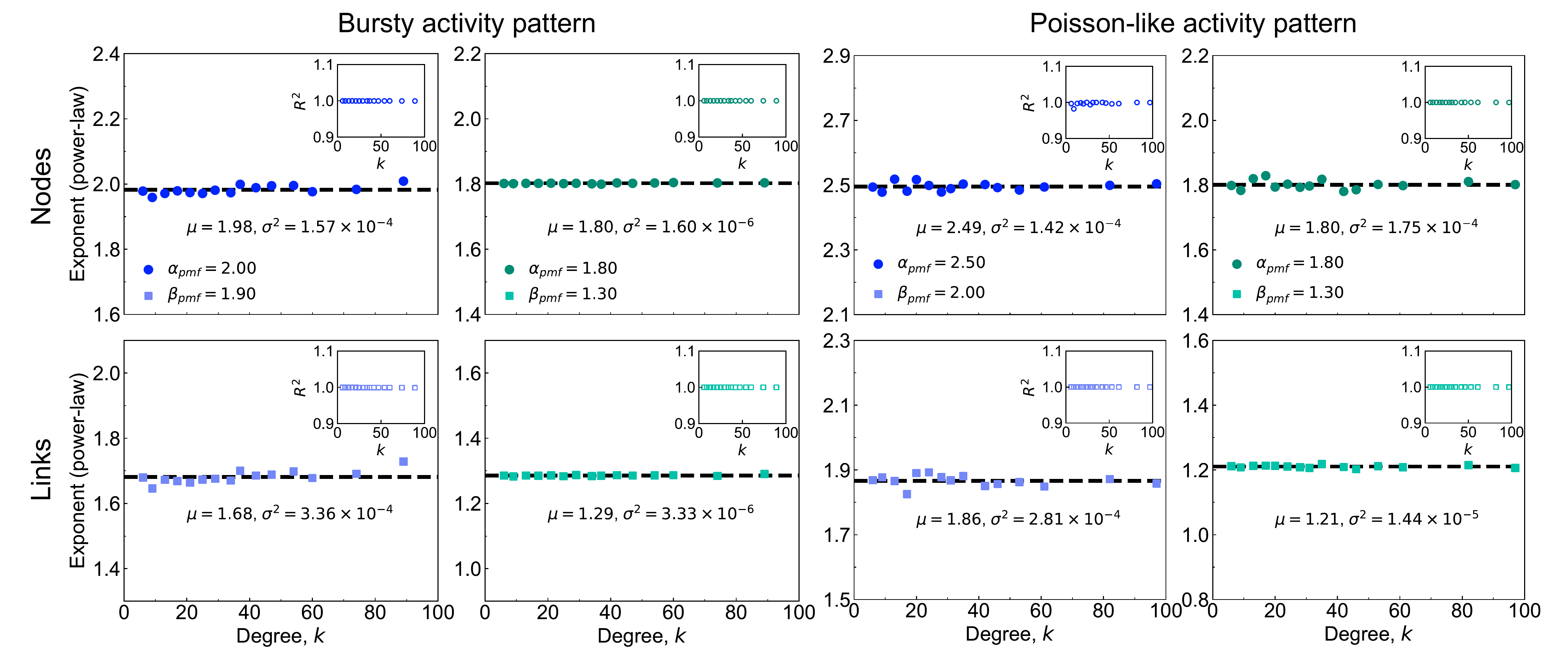}
\caption{\textbf{Aggregted IET distribution of links is robust to the selection of spanning trees.} 
The underlying topology is the Barabási-Albert scale-free network. 
Nodes with different degrees are selected to be the root of a spanning tree to represent the difference in spanning tree selection. 
The algorithmic fitted exponent does not correlate significantly with the selection of spanning trees under both patterns, and the largest magnitude of variances is no more than $10^{-4}$, indicating the robustness to spanning tree selection.
}
\end{figure}

\clearpage
\begin{figure}[p]
\includegraphics[width=1\textwidth]{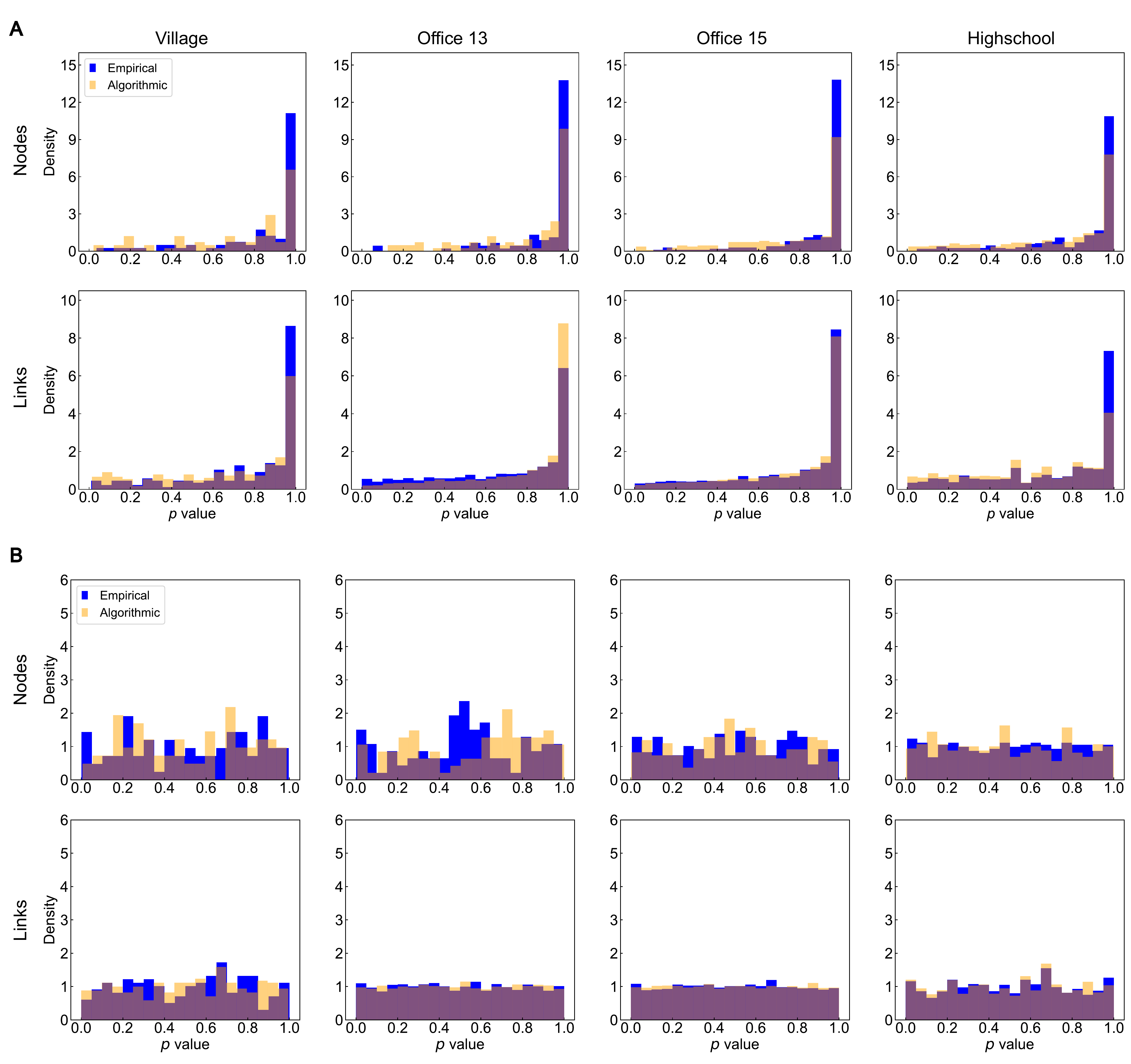}
\centering
\caption{\textbf{Empirical evidence of the activity of nodes/links being renewal processes.} 
We consider the statistical inference for the assumptions that the IET samples of each node/edge come from the same distribution (\textbf{A}) and are independent of each other (\textbf{B}).
For each node/link, we first obtain all its IET samples and randomly divided them into two sets.
For (\textbf{A}), we use the K-S test to judge whether these two sets are from the same distribution.
For (\textbf{B}), we use the Spearman rank correlation test to judge whether these two sets are independent.
We obtain the $p$-values of nodes and links with more than 6 IET samples and display them as histograms.
Blue (yellow) bars show the distribution of the $p$-value for empirical (synthetic) datasets.
}
\end{figure}

\clearpage
\begin{figure}[p]
\includegraphics[width=1\textwidth]{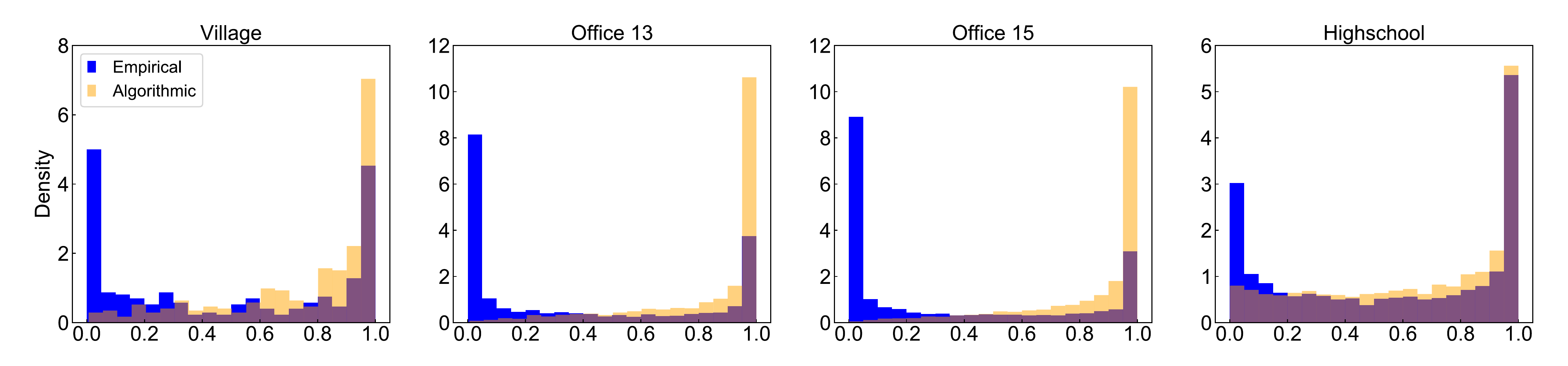}
\centering
\caption{\textbf{Testing Eq.~(S30) in empirical datasets.} 
For each pair of nodes, we use the K-S test to judge whether the IET distribution of one node remains identical when the current state of the other node is active or inactive. 
The synthetic temporal networks definitely can pass the test, but the empirical datasets reject the conditional independence in Eq.~(S30).
}
\end{figure}

\clearpage
\begin{figure}[p]
\includegraphics[width=1\textwidth]{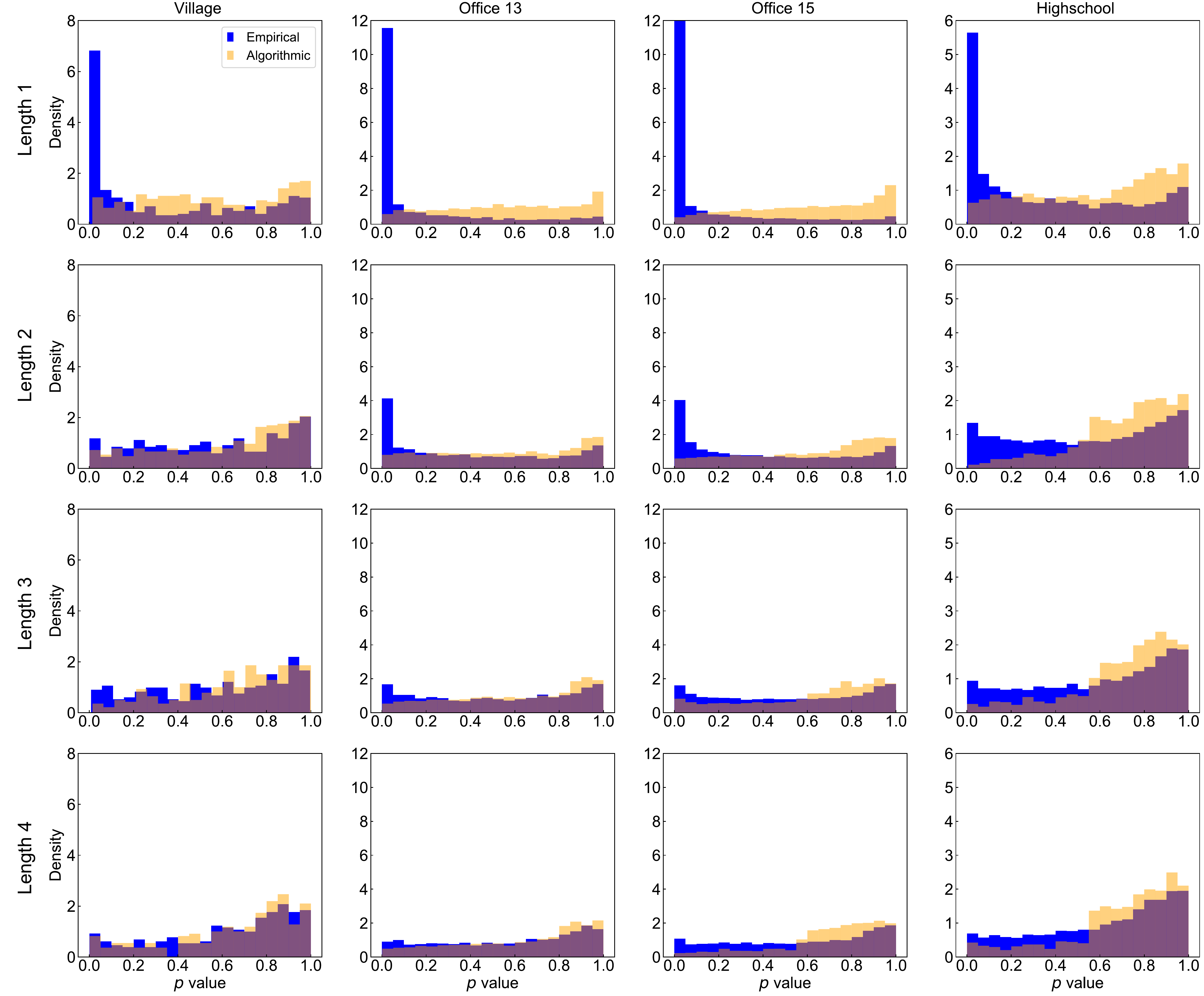}
\centering
\caption{\textbf{Conditional independence tests under different memory lengths.} 
We test Eq.~(S31) with the Chi-squared test under different lengths $l$.
When $l$ becomes larger, the distributions of $p-$values in the empirical datasets is more closed to that in the corresponding synthetic temporal networks.
Therefore, the conditional independence under a large length (such as $l=3,4$) is valid.
}
\end{figure}

\clearpage
\begin{figure}[p]
\includegraphics[width=\textwidth]{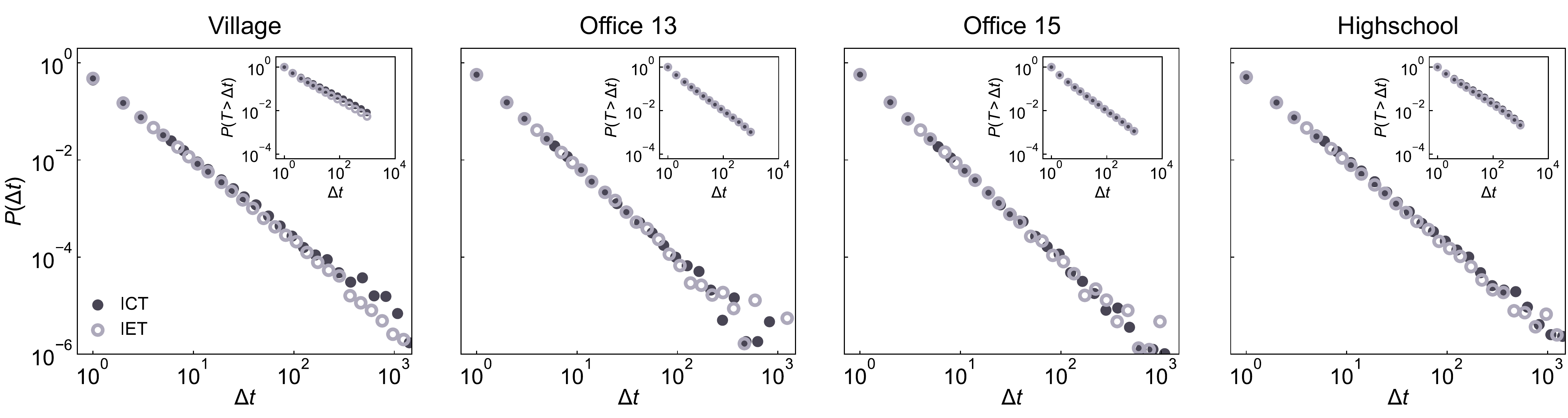}
\caption{\textbf{Comparison between the IET and ICT distributions of nodes.} 
The average degree of the underlying topologies for these datasets is 8.24, 82.42, 152.74, and 35.84, respectively, showing a high level of connectivity among populations.
Therefore, the ICT distributions collapse into the IET distributions.
}
\end{figure}

\clearpage
\begin{figure}[p]
\includegraphics[width=1\textwidth]{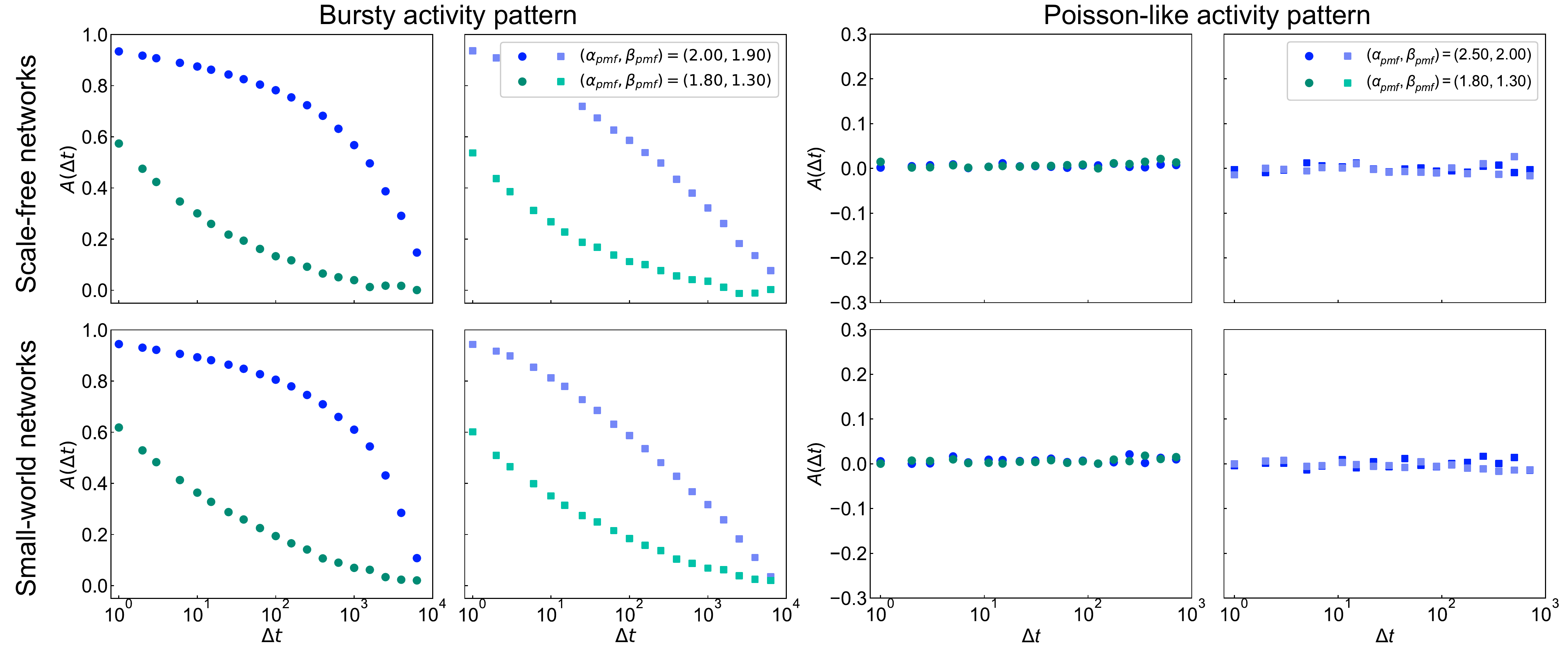}
\caption{\textbf{Temporal correlation of temporal networks.} 
The relationship between the autocorrelation function, $A(\Delta t)$, and the time interval, $\Delta t$, are plotted based on bursty and Poisson-like activity patterns.
$A(\Delta t)$ is always positive for any $\Delta t$ in bursty activity patterns, while $A(\Delta t)$ is approximately 0 in Poisson-like activity patterns.
}
\end{figure} 

\clearpage
\begin{figure}[p]
\includegraphics[width=1\textwidth]{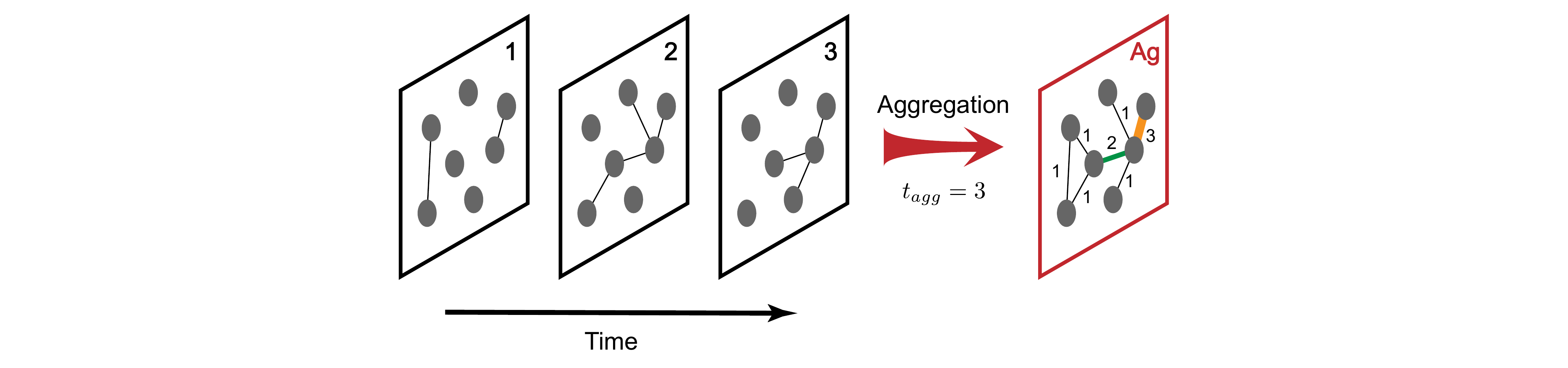}
\centering
\caption{\textbf{Illustration of network aggregation.} The aggregated network (network Ag) collects all interactions from the first $t_{agg}$ snapshots.
The weight of links in the aggregated network represents the activation numbers during the network evolution, which may be larger than $1$ (such as the orange and green links).
}
\end{figure}

\clearpage
\begin{figure}[p]
\includegraphics[width=1\textwidth]{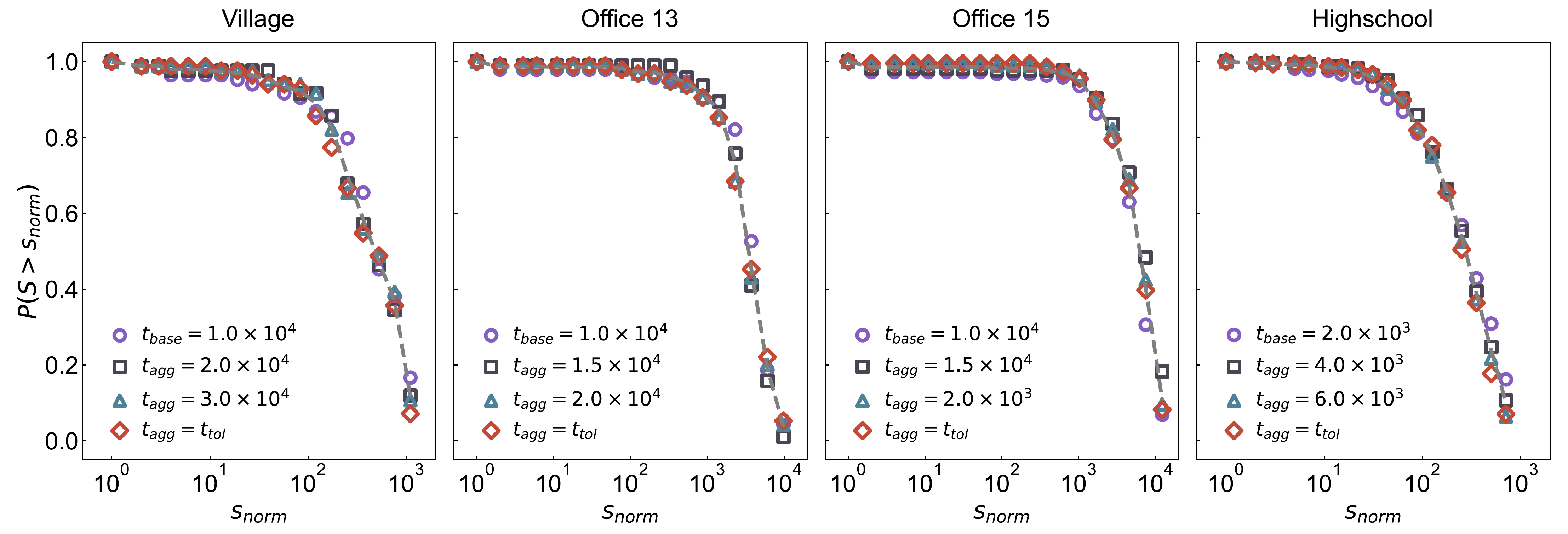}
\centering
\caption{\textbf{Robustness of node strength distributions on empirical temporal networks.} 
The normalized distribution of node strength for an aggregation time $t_{agg} > t_{base}$ collapses onto the node strength distribution of the baseline network with the aggregation time $t_{base}$. 
The grey dashed lines show the average of the distributions.
}
\end{figure}

\clearpage
\begin{figure}[p]
\includegraphics[width=1\textwidth]{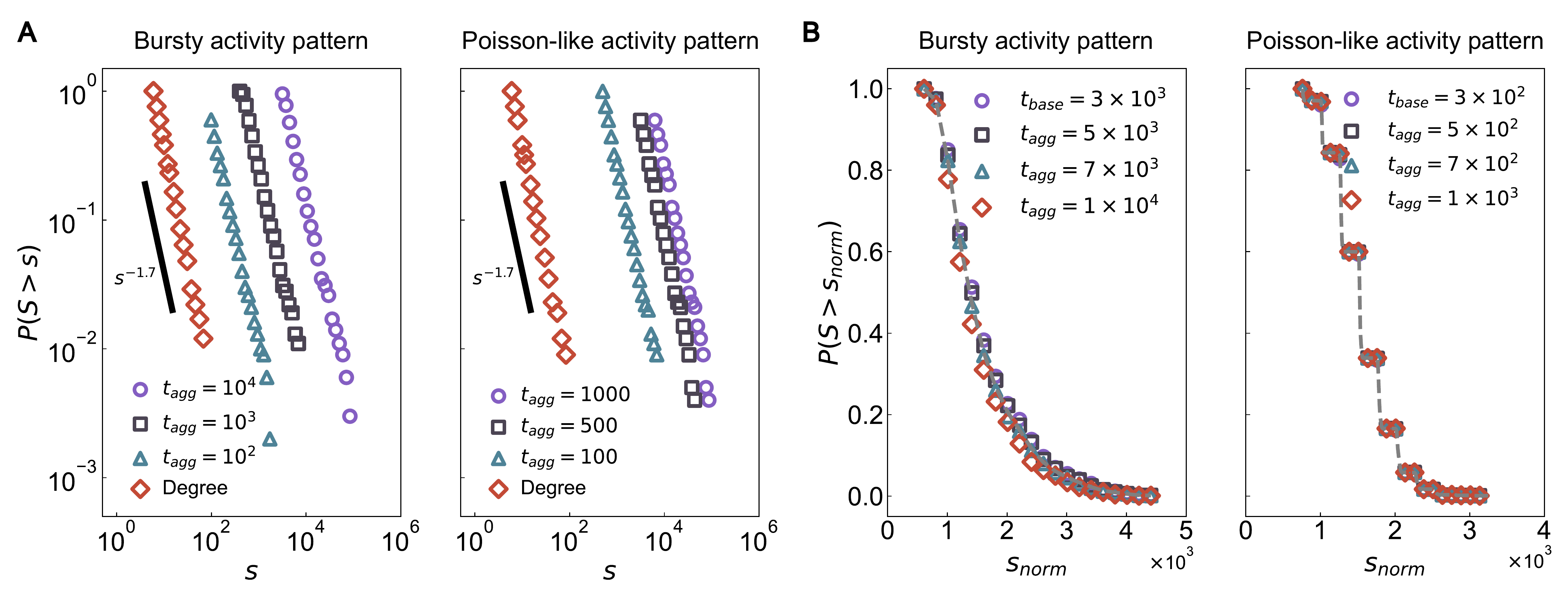}
\centering
\caption{\textbf{Robustness of node strength distributions on synthetic temporal networks.} 
We consider two classes of underlying topologies, Barabási-Albert scale-free networks (\textbf{A}) and Watts-Strogatz small-world networks (\textbf{B}).
For the first class, the survival function of the degree distribution is a power-law distribution with an exponent $\gamma \approx 1.7$ (red diamonds).
The survival function of the node strength of aggregated networks with different aggregation times $t_{agg}$ is all power-law distributions with an exponent $1.7$ in both activity patterns. 
For the second class, the normalized node strength distribution of aggregated networks is robust to the aggregation time $t_{agg}$ in both activity patterns. The targeted exponent of single nodes and links is $(\alpha_{pmf}, \beta_{pmf}) = (2.00, 1.90)$ in the bursty activity pattern and $(\alpha_{pmf}, \beta_{pmf}) = (2.50, 2.00)$ in the Poisson-like activity pattern.
}
\end{figure}

\end{document}